\theoremstyle{plain}
\newtheorem{thm}{Theorem}[section]
\newtheorem{lem}[thm]{Lemma}
\newtheorem{cor}[thm]{Corollary}
\newtheorem{rem}[thm]{Remark}
\theoremstyle{definition}
\newtheorem{exmp}[thm]{Example}
\newcommand{\rmnum}[1]{\romannumeral #1}
\newcommand{\Rmnum}[1]{\expandafter\@slowromancap\romannumeral #1@}
\numberwithin{equation}{section} \errorcontextlines=0
\begin{document}
\title{Tighter parameterized monogamy relations} %of multipartite systems}
\author{Yue Cao}
\address{School of Mathematics, South China University of Technology,
Guangzhou, Guangdong 510640, China}
\email{434406296@qq.com}
\author{Naihuan Jing*}
\address{Department of Mathematics, North Carolina State University, Raleigh, NC 27695, USA}
\email{jing@ncsu.edu}
\author{Kailash Misra}
\address{Department of Mathematics, North Carolina State University, Raleigh, NC 27695, USA}
\email{misra@ncsu.edu}
\author{Yiling Wang}
\address{Department of Mathematics, North Carolina State University, Raleigh, NC 27695, USA}
\email{ywang327@ncsu.edu}
%\thanks{{\scriptsize
%\hskip -0.4 true cm MSC (2010): Primary: 81P40; Secondary: 81Qxx.
%$*$Corresponding author, jing@ncsu.edu}}
\subjclass[2010]{Primary: 85; Secondary:}\keywords{Monogamy, Concurrence, Convex-roof extended
negativity (CREN)}
\thanks{$*$Corresponding author: jing@ncsu.edu}

\begin{abstract}
We seek a systematic tightening method to represent the monogamy relation for some measure in multipartite quantum systems. By
introducing a family of parametrized bounds, we obtain tighter lowering bounds for the monogamy relation compared with the most recently discovered relations. We provide detailed examples to illustrate why our bounds are better.
\end{abstract}

\maketitle

\section{\textbf{Introduction}}
Quantum entanglement \cite{HHHH}
%as an important quantum correlation
holds great significance in the realm of quantum information processing. It represents a special and characteristic state of a quantum system where two or more quantum particles are correlated or interdependent no matter how far they are. It is this special correlation that distinguishes a quantum system from a class system and more importantly, it is responsible for quantum supremacy and separating classical mechanics from quantum theory.

Quantum entanglement as an important quantum correlation describes the bounding
phenomenon of various sub-particles in ways that each sub-particle or state cannot be independently determined from the others. One significant phenomenon is the monogamy relation \cite{ASI}. The monogamy relation has important applications in quantum key distribution \cite{P}, foundations of quantum mechanics \cite{T,S}, quantum communications
\cite{B,CB,GR}, quantum computing \cite{EJ,NC}
etc. It underscores the idea that entanglement is not distributable without limitations,
leading to important implications for the manipulation and transmission of quantum information within multipartite quantum systems
\cite{P,AMG}.

Monogamy inequality was first discovered by Coffman, Kundu, and Wootters \cite{CKW} for concurrence in three-qubit systems:
$
C^{2}(\rho_{A|BC})\geq C^{2}(\rho_{AB})+C^{2}(\rho_{AC}),
$
where $\rho_{AB}$ and $\rho_{AC}$ are the reduced density matrices of $\rho_{ABC}$. It was quickly generalized to a multipartite system by Osborne and Verstraete \cite{OV}:
\begin{equation}\label{e:1-mono1}
C^{2}(\rho_{A|B_{1}\cdots B_{N-1}})\geq \sum_{i=1}^{N-1} C^{2}(\rho_{AB_{i}}).
\end{equation}

Monogamy has been studied for many different situations \cite{G, JHC,JLLF,ZF,KPSS, KDS,OF}. The monogamy relations
can help us to further understand the distribution of entanglement in multipartite
systems. Since it represents certain constrain among different subsystems, it is natural to look for tighter
monogamy relations and this may provide
a better characterization of the distribution of quantum correlations. This may lead to tighter
security bounds in quantum cryptography \cite{GRTZ}.

One common method to study monogamy relations
is to derive the consequential monogamy relations by bounding the binomial function
$(1+t)^{x}$ using various smart estimates, see
for example \cite{JFQ,ZJZ1,ZJZ2,ZLJM}.
By drastically improving the bound, some of us have recently found a unified and improved method to estimate its lower bounds and derive tighter monogamy relations \cite{CJW}.

%In \cite{JLLF,ZF}, the authors gave another set of monogamy relations for $\alpha\geq\gamma>0$.
In \cite{GYG1} the authors have formulated the transitivity property for the monogamy relation. Suppose a monogamy relation for certain measure $\mathcal{E}$ holds for the infinimum $\alpha_c$:
\begin{align}\label{e:1-mono2}
\mathcal{E}^{\alpha_{c}}(\rho_{A|B_{1}\cdots B_{N-1}})\geq \sum_{i=1}^{N-1} \mathcal{E}^{\alpha_{c}}(\rho_{AB_{i}}).
\end{align}
then the monogamy relation also holds for any $\alpha\geq \alpha_c$. Here $\rho_{AB_{i}}=\operatorname{Tr}_{B_{1}\cdots B_{i-1}B_{i+1}\cdots B_{N-1}}(\rho_{A B_{1}\cdots B_{N-1}})$  is the reduced density matrix.
Applying the monogamy relation in \cite{GYG1} to quantum correlations like squared convex-roof extended negativity, entanglement of formation, and concurrence one can get tighter monogamy inequalities.
Recently, the authors \cite{TZJF,LSF} rigorously proved that their monogamy is tighter than those given before. This prompts us to seek further improvement for the
the monogamy relations.
%of \cite{GYG1,TZJF,LSF}.

In this study, we propose a new method by introducing parametrized bounds. Our idea is to give a family of tighter monogamy relations in a unified manner. We know that all these work for the quantum correlation measure such as the concurrence or convex-roof extended
negativity. In this case, we
obtain better monogamy relations than \cite{TZJF,LSF} as well as those from \cite{JLLF,ZF,YCFW,JF1}. Furthermore, We can choose appropriate parameter to get tighter monogamy relations.

This paper is organized as follows. In Section
\ref{s:prelim} we prepare the necessary mathematical tools
to deal with the approximation.
In Section \ref{s:monogamy} we give tighter monogamy
relations based on the mathematical results
in Section \ref{s:prelim}. In the last section, we give
three examples to show why our new bounds are stronger than some of the recently found sharper bounds.

\section{\textbf{Preliminaries about several important inequalities }}\label{s:prelim}
In the study of the $\alpha$th-monogamy relations, we usually consider how to bound the binomial function $(1+t)^x$ over an interval.

We first consider some basic inequalities.
For $0\leq t\leq k, 0<k\leq1, x\geq 2$, the authors have shown in \cite{TZJF,LSF} that
\begin{equation}\label{e:2-1}
\begin{aligned}
(1+t)^{x}\geq 1+\left(\frac{(1+k)^{x}-1}{k^{x}}+k^{x}-t^{x}\right)t^{x}\geq 1+(2^{x}-t^{x})t^{x}.
\end{aligned}
\end{equation}
Furthermore, they also proved that for $ t\geq k\geq1, x\geq 2$,
%\cite{TZJF,LSF}
\begin{equation}\label{e:2-2}
\begin{aligned}
(1+t)^{x}\geq t^{x}+(1+k)^{x}-k^{x}+k^{-x}-t^{-x} \geq t^{x}+2^{x}-t^{-x}.
\end{aligned}
\end{equation}

We now introduce a parameter to get a family of
stronger inequalities than \eqref{e:2-1}--\eqref{e:2-2}, which will prepare the mathematical background for strengthening the monogamy relations of the multipartite quantum.
%\begin{lem}\label{1:lem1} Let $0< t\leq1$ and $m\geq 0$ %be a parameter, then
%\begin{align}
%(1+t)^{x}\geq 1+(2^{x}-t^{x})t^{x}+mx(\frac{1}%{t}-1)t^{x}.
%\end{align}
%for $x\geq 1+{\rm log}_{2}(m+2)\geq2$.
%\end{lem}

%More generally, we choose a parameter $m$ and also have
\begin{lem}\label{e:2-lem1} For any parameter $m\geq 0$, suppose $x\geq 1+{\rm log}_{2}(m+2)\geq2$.\\
$(\rmnum{1})$ Let $0< t\leq k\leq1$, then
\begin{equation}\label{e:2-3}
\begin{aligned}
(1+t)^{x}\geq 1+\left(\frac{(1+k)^{x}-1}{k^{x}}+k^{x}-t^{x}\right)t^{x}+mx\left(\frac{1}{t}-\frac{1}{k}\right)t^{x}.
\end{aligned}
\end{equation}
$(\rmnum{2})$ Let $ t\geq k\geq1$, then we have
\begin{equation}\label{e:2-4}
\begin{aligned}
(1+t)^{x}\geq t^{x}+(1+k)^{x}-k^{x}+k^{-x}-t^{-x}+mx(t-k).
\end{aligned}
\end{equation}
\end{lem}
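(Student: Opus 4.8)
The plan is to recast each inequality as a single-variable monotonicity statement after noticing that both sides coincide when $t=k$, so that in each case the claim merely asserts that moving $t$ away from $k$ in the permitted direction is favorable.

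For $(\rmnum{1})$, fix $k\in(0,1]$, subtract $1$ from \eqref{e:2-3}, and divide by $t^{x}>0$; gathering all $t$-dependent terms on the left, the inequality becomes $\Phi(t)\ge\Phi(k)$ for $0<t\le k$, where
\[
\Phi(s)=\frac{(1+s)^{x}-1}{s^{x}}+s^{x}-\frac{mx}{s}.
\]
Since $t\le k$, it is enough to show that $\Phi$ is nonincreasing on $(0,1]$. Differentiating and clearing the positive factor $s^{x+1}$, the condition $\Phi'(s)\le0$ becomes equivalent to
\[
(1+s)^{x-1}\ge 1+s^{2x}+m\,s^{x-1},\qquad 0<s\le1.
\]
For $(\rmnum{2})$ I proceed identically: \eqref{e:2-4} is equivalent to $\Psi(t)\ge\Psi(k)$ for $t\ge k\ge1$, where $\Psi(s)=(1+s)^{x}-s^{x}+s^{-x}-mx\,s$, so it suffices to prove $\Psi$ is nondecreasing on $[1,\infty)$; computing $\Psi'$ reduces this to
\[
(1+s)^{x-1}\ge s^{x-1}+s^{-x-1}+m,\qquad s\ge1.
\]

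Both auxiliary inequalities are controlled by the monotonicity of the function $s\mapsto(1+s)^{x-1}-s^{x-1}$, whose derivative $(x-1)\bigl[(1+s)^{x-2}-s^{x-2}\bigr]$ is nonnegative because $x-1\ge1$. For $(\rmnum{1})$ I first use $s^{2x}\le s^{x-1}$ (valid since $s\le1$ and $2x\ge x-1$) to reduce to $(1+s)^{x-1}\ge1+(m+1)s^{x-1}$; substituting $u=1/s\ge1$ this reads $(1+u)^{x-1}-u^{x-1}\ge m+1$, and the left side, being nondecreasing in $u$, attains its minimum $2^{x-1}-1$ at $u=1$. For $(\rmnum{2})$ the left side $(1+s)^{x-1}-s^{x-1}$ is nondecreasing with minimum $2^{x-1}-1$ at $s=1$, while the right side $s^{-x-1}+m$ is decreasing with maximum $1+m$ at $s=1$; comparing these two extremes closes the estimate.

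The crux of the argument, and the only place the hypothesis is invoked, is the boundary point $s=1$: in both reductions the bound required there is exactly $2^{x-1}-1\ge m+1$, that is $2^{x-1}\ge m+2$, which is precisely $x\ge1+\log_{2}(m+2)$. The main obstacle is therefore not the endpoint itself but showing the derivative inequalities persist across the entire interval; this is handled uniformly by the monotonicity of $(1+s)^{x-1}-s^{x-1}$ together with the elementary term-by-term bounds above, so that no delicate case analysis is needed and the single condition $x\ge1+\log_{2}(m+2)$ suffices for both parts.
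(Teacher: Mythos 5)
Your proof is correct, and while it shares the paper's overall skeleton, the way you verify the key monotonicity is genuinely different. Your $\Phi(s)$ and $\Psi(s)$ are reparametrizations of the single function the paper uses, $h(x,y)=(1+y)^{x}-y^{x}+y^{-x}-mxy$: indeed $\Phi(s)=h(x,1/s)$ and $\Psi(s)=h(x,s)$, so in both treatments the lemma reduces to $h$ being nondecreasing in $y$ on $[1,\infty)$, with the hypothesis entering at exactly the same endpoint computation $2^{x-1}\geq m+2$. The difference is in how monotonicity is established. The paper computes $\frac{\partial^{2} h}{\partial y^{2}}=x(x-1)y^{x-2}\bigl((1+\tfrac{1}{y})^{x-2}-1\bigr)+x(x+1)y^{-x-2}\geq0$ for $x\geq2$, so that $\frac{\partial h}{\partial y}$ is minimized at $y=1$, where it equals $x(2^{x-1}-2-m)\geq0$; this single convexity argument covers both parts at once with no slack. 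You instead bound each first derivative directly, term by term: in (ii) you play the nondecreasing function $(1+s)^{x-1}-s^{x-1}$ (minimum $2^{x-1}-1$ at $s=1$) against the decreasing $s^{-x-1}+m$ (maximum $1+m$ at $s=1$), and in (i) you first discard $s^{2x}\leq s^{x-1}$ before the substitution $u=1/s$ brings you back to the same monotone function. This avoids second derivatives entirely and makes visible where slack is thrown away, at the cost of handling the two parts separately and one lossy (but valid, since $2x\geq x-1$ and $s\leq1$) comparison; the paper's route is more uniform but requires the convexity computation. Both arguments correctly isolate $x\geq1+\log_{2}(m+2)$ as the precise threshold, and all of your derivative computations check out.
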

 \begin{proof} Fix $m\geq 0$,
consider the following function:
$$h(x,y)=(1+y)^{x}-y^{x}+y^{-x}-mxy,$$
defined on $(x,y)\in [1+{\rm log}_{2}(m+2),+\infty)\times[1,+\infty).$
Note that
\begin{align*}
\frac{\partial h}{\partial y}(x,y)=x(1+y)^{x-1}-xy^{x-1}-xy^{-x-1}-mx
\end{align*}
and
\begin{align*}
\frac{\partial^{2} h}{\partial y^{2}}(x,y)=x(x-1)y^{x-2}\left((1+\frac{1}{y})^{x-2}-1\right)+x(x+1)y^{-x-2}.
\end{align*}
So $\frac{\partial^{2} h}{\partial y^{2}}\geq0$ for
$x\geq 2$. This means that for fixed $x\geq 2$, the function $\frac{\partial
h}{\partial y}$ is increasing as a function of $y$.
Therefore for $y\geq 1$
\begin{equation}\label{e:der1}
\frac{\partial h}{\partial y}(x,y)\geq\frac{\partial h}{\partial y}(x,1)=x(2^{x-1}-2-m).
\end{equation}
The right-hand side of \eqref{e:der1} is nonnegative when $x\geq 1+\log_{2}(m+2)$. Note that $\log_2(m+2)\geq 1$ for $m\geq 0$. Therefore $h(x,y)$ is increasing with respect to $y$ for $x\geq 1+\log_{2}(m+2)$. Subsequently,

$(\rmnum{1})$ for $0<t\leq k\leq1$, we have
$h(x,\frac{1}{t})\geq h(x,\frac{1}{k})\geq h(x,1)$, which is \eqref{e:2-3}.

$(\rmnum{2})$ for $ t\geq k\geq1$, we have $h(x,t)\geq h(x,k)\geq h(x,1)$, which immeditately implies \eqref{e:2-4}.
\end{proof}

The following result is the special case of $k=1$ in Lemma \ref{e:2-lem1}.
\begin{cor}\label{e:2-cor1} Let $m\geq 0$ be a parameter and $x\geq 1+{\rm log}_{2}(m+2)\geq2$.\\
$(\rmnum{1})$ If $0< t\leq1$, then
\begin{equation}\label{e:2-5}
\begin{aligned}
(1+t)^{x}\geq 1+(2^{x}-t^{x})t^{x}+mx(\frac{1}{t}-1)t^{x}.
\end{aligned}
\end{equation}
$(\rmnum{2})$ If $ t\geq1$, then we have
\begin{equation}\label{e:2-6}
\begin{aligned}
(1+t)^{x}\geq t^{x}+2^{x}-t^{-x}+mx(t-1).
\end{aligned}
\end{equation}
\end{cor}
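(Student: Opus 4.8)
The plan is to obtain both statements directly from Lemma \ref{e:2-lem1} by specializing the free parameter $k$ to the boundary value $k=1$, which is precisely the point where the two regimes $0<t\leq k\leq 1$ and $t\geq k\geq 1$ meet. Since Lemma \ref{e:2-lem1} has already been established for every admissible $k$, no new analytic input (monotonicity in $y$, convexity via $\partial^{2}h/\partial y^{2}\geq 0$) is required; the entire content of the corollary reduces to substituting $k=1$ into \eqref{e:2-3}--\eqref{e:2-4} and simplifying the resulting constants.

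For part $(\rmnum{1})$, I would start from \eqref{e:2-3} and set $k=1$. The coefficient $\frac{(1+k)^{x}-1}{k^{x}}+k^{x}-t^{x}$ collapses to $\frac{2^{x}-1}{1}+1-t^{x}=2^{x}-t^{x}$, while the bracket $\frac{1}{t}-\frac{1}{k}$ becomes $\frac{1}{t}-1$; simultaneously the hypothesis $0<t\leq k\leq 1$ degenerates to $0<t\leq 1$. This reproduces \eqref{e:2-5} verbatim.

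For part $(\rmnum{2})$, I would likewise substitute $k=1$ into \eqref{e:2-4}. The constant $(1+k)^{x}-k^{x}+k^{-x}$ simplifies to $2^{x}-1+1=2^{x}$, and the correction term $mx(t-k)$ becomes $mx(t-1)$, while $t\geq k\geq 1$ reduces to $t\geq 1$. This yields \eqref{e:2-6}.

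The only point needing any attention -- and hence the closest thing to an obstacle -- is to confirm that the value $k=1$ is genuinely admissible in both branches of Lemma \ref{e:2-lem1}, i.e. that the closed constraints $0<t\leq k\leq 1$ and $t\geq k\geq 1$ each permit $k=1$, and that the hypothesis $x\geq 1+\log_{2}(m+2)\geq 2$ is identical in the lemma and in the corollary. Both verifications are immediate, so the corollary follows at once.
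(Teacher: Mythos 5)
Your proposal is correct and matches the paper exactly: the paper states that this corollary is ``the special case of $k=1$ in Lemma \ref{e:2-lem1},'' which is precisely your substitution, and your simplifications of the constants ($2^{x}-t^{x}$ in part $(\rmnum{1})$, $2^{x}-t^{-x}$ in part $(\rmnum{2})$) are accurate. Nothing further is needed.
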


We claim that our new inequalities are stronger than previous studies. This is substantiated by comparing with the recently available strong inequalities. There are two cases to consider.

{\bf Case 1:} For $0< t\leq k\leq1$, Lemma \ref{e:2-lem1} $(i)$  actually shows the following
\begin{align*}
 (1+t)^{x}&\geq 1+\left(\frac{(1+k)^{x}-1}{k^{x}}+k^{x}-t^{x}\right)t^{x}+mx\left(\frac{1}{t}-\frac{1}{k}\right)t^{x}\\
 &\geq 1+\left(\frac{(1+k)^{x}-1}{k^{x}}+k^{x}-t^{x}\right)t^{x}\quad(\text{That of \cite{LSF}})\\
 &\geq 1+\left(\frac{(1+k)^{x}-1}{k^{x}}\right)t^{x}\quad(\text{The inequality in \cite{YCFW}})\\
 &\geq 1+(2^{x}-1)t^{x}\quad(\text{The inequality in \cite{JLLF}})\\
 &\geq 1+xt^{x}\quad(\text{The inequality in \cite{JF1}})\\
 &\geq 1+t^{x}\quad(\text{The inequality in \cite{ZF}}).
\end{align*}
{\bf Case 2:} For $ t\geq k\geq1$, Lemma \ref{e:2-lem1} $(ii)$ gives that
\begin{align*}
(1+t)^{x}&\geq t^{x}+(1+k)^{x}-k^{x}+k^{-x}-t^{-x}+mx(t-k)\\
&\geq t^{x}+(1+k)^{x}-k^{x}+k^{-x}-t^{-x}\quad(\text{by \eqref{e:2-2}})\\
&\geq t^{x}+(1+k)^{x}-k^{x}\quad(\text{The inequality in \cite{GYG1}})\\
&\geq t^{x}+2^{x}-1\quad(\text{The inequality in \cite{JLLF}})\\
&\geq t^{x}+1\quad(\text{The inequality in \cite{ZF}})
\end{align*}
for $x\geq 1+{\rm log}_{2}(m+2)$ and arbitrary $ m\geq0$.

In fact, the inequalities in \cite{TZJF,LSF} are the special cases of Lemma \ref{e:2-lem1} and Corollary \ref{e:2-cor1} for $m=0$. Our parameterized bounds (via the parameter $m$) are stronger in all levels in the sense that we divide the region into smaller ones on which our parameterized bounds are tighter than the previously available bounds in \cite{JLLF,ZF,GYG1,TZJF,LSF,YCFW,JF1}.

\section{\textbf{Tighter $\alpha$th power of entanglement measures }}\label{s:monogamy}

Let $\rho=\rho_{AB_{1}\cdots B_{N-1}}$ be a quantum state over the Hilbert space $\mathcal{H}_{A}\bigotimes $ $ \mathcal{H}_{B_{1}} $ $
\bigotimes $ $ \cdots$ $ \bigotimes
\mathcal{H}_{B_{N-1}}$, and $\mathcal{E}$
a bipartite entanglement measure of quantum correlation. The $\gamma$th-monogamy $(\gamma> 0)$ of the measure $\mathcal{E}$ is defined as
\cite[(1)]{GYG1}
\begin{align}\label{e:3-mono1}
\mathcal{E}^{\gamma}(\rho_{A|B_{1}\cdots B_{N-1}})\geq \sum_{i=1}^{N-1} \mathcal{E}^{\gamma}(\rho_{AB_{i}}).
\end{align}
where $\rho_{AB_{i}}$ is the reduced density matrix. The exponent $\gamma$ depends on the infimum of all indices satisfying monogamy
relation \eqref{e:3-mono1} of measure $\mathcal{E}$ (eg. If $\mathcal{E}=C$, then $\gamma=2$ ).
%inequality \eqref{e:3-1} corresponds to inequality \eqref{e:mono2'}).
The $\gamma$th-monogamy relations are applicable to some quantum correlation measures, such as concurrence \cite{U,RBCGM,AF}, negativity \cite{VW}, logarithmic negativity \cite{P1}, entanglement of formation \cite{JLLF,ZF,JF}, etc.
We can study the generalized monogamy relations in a uniform manner as follows.

From now on, if the state $\rho$ is clear from the context, we simply write $\mathcal{E}(\rho_{AB_{i}})=\mathcal{E}_{AB_{i}}$,
$\mathcal{E}(\rho_{A|B_{1}B_{2}\cdots B_{N-1}})=\mathcal{E}_{A|B_{1}B_{2}\cdots B_{N-1}}$.

\subsection{\textbf{Tighter $\alpha$th-monogamy relation based on {\bf Case 1}}}

We will first give some tighter $\alpha$th power inequalities based on Lemma \ref{e:2-lem1} $(\rmnum{1})$ and Corollary \ref{e:2-cor1} $(\rmnum{1})$.

\begin{thm}\label{e:3.1-thm1}
Let $\mathcal{E}$ be a bipartite quantum measure satisfying the $\gamma$th-monogamy relation \eqref{e:3-mono1} for the tripartite state
$\rho_{ABC}$.
If $\mathcal{E}^{\gamma}_{AB}\geq \mathcal{E}^{\gamma}_{AC}>0$ $(\gamma>0)$, then one has that
\begin{align}\label{e:3.1-1}
\mathcal{E}_{A|BC}^{\alpha}\geq\mathcal{E}_{AB}^{\alpha}+\left[\left(2^{\frac{\alpha}{\gamma}}-\frac{\mathcal{E}^{\alpha}_{AC}}{\mathcal{E}^{\alpha}_{AB}}\right)+\frac{m\alpha}{\gamma}\left(\frac{\mathcal{E}_{AB}^{\gamma}}{\mathcal{E}_{AC}^{\gamma}}-1\right)\right]\mathcal{E}_{AC}^{\alpha}
\end{align}
for any $\alpha\geq \left[1+{\rm log}_{2}(m+2)\right]\gamma\geq 2\gamma>0$, $m\geq0$.
\end{thm}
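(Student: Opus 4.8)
The plan is to reduce the monogamy inequality \eqref{e:3.1-1} to the scalar inequality in Corollary \ref{e:2-cor1}$(\rmnum{1})$ by a suitable substitution. Since $\mathcal{E}$ satisfies the $\gamma$th-monogamy relation for the tripartite state $\rho_{ABC}$, we have by \eqref{e:3-mono1} that $\mathcal{E}_{A|BC}^{\gamma}\geq \mathcal{E}_{AB}^{\gamma}+\mathcal{E}_{AC}^{\gamma}$. The key observation is that the desired bound involves the $\alpha$th powers, and $\alpha/\gamma \geq 1+\log_2(m+2)\geq 2$, so raising the monogamy relation to the power $\alpha/\gamma$ will allow us to apply Corollary \ref{e:2-cor1}.

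\textbf{Key steps.}

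First I would set $x=\alpha/\gamma$ and introduce the ratio $t=\mathcal{E}_{AC}^{\gamma}/\mathcal{E}_{AB}^{\gamma}$. By the hypothesis $\mathcal{E}_{AB}^{\gamma}\geq \mathcal{E}_{AC}^{\gamma}>0$, we have $0<t\leq 1$, and by hypothesis $x\geq 1+\log_2(m+2)\geq 2$, so the conditions of Corollary \ref{e:2-cor1}$(\rmnum{1})$ are exactly met. Starting from the monogamy relation, I would write
\begin{align*}
\mathcal{E}_{A|BC}^{\alpha}=\left(\mathcal{E}_{A|BC}^{\gamma}\right)^{x}\geq\left(\mathcal{E}_{AB}^{\gamma}+\mathcal{E}_{AC}^{\gamma}\right)^{x}=\mathcal{E}_{AB}^{\alpha}\left(1+t\right)^{x}.
\end{align*}
Then I would apply \eqref{e:2-5} to the factor $(1+t)^{x}$, obtaining
\begin{align*}
\mathcal{E}_{AB}^{\alpha}\left(1+t\right)^{x}\geq \mathcal{E}_{AB}^{\alpha}\left[1+(2^{x}-t^{x})t^{x}+mx\left(\tfrac{1}{t}-1\right)t^{x}\right].
\end{align*}
The final step is purely algebraic: I would substitute $t=\mathcal{E}_{AC}^{\gamma}/\mathcal{E}_{AB}^{\gamma}$ back and simplify. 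Note $t^{x}=\mathcal{E}_{AC}^{\alpha}/\mathcal{E}_{AB}^{\alpha}$, so $\mathcal{E}_{AB}^{\alpha}\cdot t^{x}=\mathcal{E}_{AC}^{\alpha}$, which converts each term into the form appearing in \eqref{e:3.1-1}; the term $mx(1/t-1)t^{x}$ becomes $\tfrac{m\alpha}{\gamma}(\mathcal{E}_{AB}^{\gamma}/\mathcal{E}_{AC}^{\gamma}-1)\mathcal{E}_{AC}^{\alpha}$ after multiplying by $\mathcal{E}_{AB}^{\alpha}$ and recognizing $1/t=\mathcal{E}_{AB}^{\gamma}/\mathcal{E}_{AC}^{\gamma}$.

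\textbf{Main obstacle.}

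The conceptual content is entirely in the monotonicity step $\left(\mathcal{E}_{A|BC}^{\gamma}\right)^{x}\geq\left(\mathcal{E}_{AB}^{\gamma}+\mathcal{E}_{AC}^{\gamma}\right)^{x}$, which requires $x\geq 0$ so that $u\mapsto u^{x}$ is nondecreasing; this is guaranteed since $x\geq 2$. The rest is bookkeeping, but the step I expect to demand the most care is the bookkeeping itself: tracking the factor $\mathcal{E}_{AB}^{\alpha}$ through each term and correctly matching $t^{x}=\mathcal{E}_{AC}^{\alpha}/\mathcal{E}_{AB}^{\alpha}$ and $1/t=\mathcal{E}_{AB}^{\gamma}/\mathcal{E}_{AC}^{\gamma}$ so that the parameter $m$ term lands with the coefficient $m\alpha/\gamma$ rather than $mx$. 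Since $x=\alpha/\gamma$, these agree, but one must be careful not to conflate the exponents of $\alpha$ and $\gamma$ when the ratio appears both inside and outside the bracket. No genuine analytic difficulty arises beyond what Corollary \ref{e:2-cor1} already supplies.
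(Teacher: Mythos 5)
Your proposal is correct and follows essentially the same route as the paper's own proof: both raise the $\gamma$th-monogamy relation to the power $\alpha/\gamma$, factor out $\mathcal{E}_{AB}^{\alpha}$, and apply Corollary \ref{e:2-cor1}$(\rmnum{1})$ with $t=\mathcal{E}_{AC}^{\gamma}/\mathcal{E}_{AB}^{\gamma}\leq 1$ and $x=\alpha/\gamma$, then simplify using $t^{x}=\mathcal{E}_{AC}^{\alpha}/\mathcal{E}_{AB}^{\alpha}$. The only difference is cosmetic: you name the variables $t$ and $x$ explicitly, while the paper carries the ratios inline.
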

\begin{proof} It follows from \eqref{e:3-mono1} and Corollary \ref{e:2-cor1} $(\rmnum{1})$ that
\begin{align*}
\mathcal{E}_{A|BC}^{\alpha}&=(\mathcal{E}_{A|BC}^{\gamma})^{\frac{\alpha}{\gamma}}\geq
(\mathcal{E}^{\gamma}_{AB}+\mathcal{E}^{\gamma}_{AC})^{\frac{\alpha}{\gamma}}=\mathcal{E}_{AB}^{\alpha}\left(1+\frac{\mathcal{E}^{\gamma}_{AC}}{\mathcal{E}^{\gamma}_{AB}}\right)^{\frac{\alpha}{\gamma}}\\
&\geq\mathcal{E}_{AB}^{\alpha}\left[1+\left(2^{\frac{\alpha}{\gamma}}-\left(\frac{\mathcal{E}^{\gamma}_{AC}}{\mathcal{E}^{\gamma}_{AB}}\right)^{\frac{\alpha}{\gamma}}\right)\left(\frac{\mathcal{E}^{\gamma}_{AC}}{\mathcal{E}^{\gamma}_{AB}}\right)^{\frac{\alpha}{\gamma}}+\frac{m\alpha}{\gamma}\left(\frac{\mathcal{E}^{\gamma}_{AB}}{\mathcal{E}^{\gamma}_{AC}}-1\right)\left(\frac{\mathcal{E}^{\gamma}_{AC}}{\mathcal{E}^{\gamma}_{AB}}\right)^{\frac{\alpha}{\gamma}}\right]\\
&=\mathcal{E}_{AB}^{\alpha}+\left[2^{\frac{\alpha}{\gamma}}-\frac{\mathcal{E}^{\alpha}_{AC}}{\mathcal{E}^{\alpha}_{AB}}+\frac{m\alpha}{\gamma}\left(\frac{\mathcal{E}^{\gamma}_{AB}}{\mathcal{E}^{\gamma}_{AC}}-1\right)\right]\mathcal{E}_{AC}^{\alpha}.
\end{align*}
\end{proof}

Theorem \ref{e:3.1-thm1} also says that the lower bound can be strengthened according to the
value of $\alpha$. As $\alpha$ increases, the monogamy relation becomes tighter. In fact, for the $\alpha$th-monogamy relation, we can choose an appropriate value of $m$ to optimize the
lower bounds. We can simply choose $m=\lfloor 2^{\frac{\alpha}{\gamma}-1}-2\rfloor$, then the $\alpha$-th monogamy is clearly better than
the monogamy relation \eqref{e:3.1-1} with $m$ replaced by $m-1$.

If we know \'a priori some information about the subsystems such as the ratio $\mathcal{E}^{\gamma}_{AB}/\mathcal{E}^{\gamma}_{AC}$, then we can improve the monogamy relation.

\begin{thm}\label{e:3.1-thm2}
Let $\mathcal{E}$ be a bipartite quantum measure satisfying the $\gamma$th-monogamy relation \eqref{e:3-mono1} for the tripartite state
$\rho_{ABC}$. If $k\mathcal{E}^{\gamma}_{AB}\geq \mathcal{E}^{\gamma}_{AC}>0$ $(\gamma>0)$ and $0<k\leq1$, then one has that
\begin{align}\label{e:3.1-2}
\mathcal{E}_{A|BC}^{\alpha}\geq\mathcal{E}_{AB}^{\alpha}+\left[\frac{(1+k)^{\frac{\alpha}{\gamma}}-1}{k^{\frac{\alpha}{\gamma}}}+k^{\frac{\alpha}{\gamma}}-\frac{\mathcal{E}^{\alpha}_{AC}}{\mathcal{E}^{\alpha}_{AB}}+\frac{m\alpha}{\gamma}\left(\frac{\mathcal{E}_{AB}^{\gamma}}{\mathcal{E}_{AC}^{\gamma}}-\frac{1}{k}\right)\right]\mathcal{E}_{AC}^{\alpha}
\end{align}
for any $\alpha\geq \left[1+{\rm log}_{2}(m+2)\right]\gamma\geq 2\gamma>0$, $m\geq0$.
\end{thm}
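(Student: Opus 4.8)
The plan is to follow the same template as the proof of Theorem \ref{e:3.1-thm1}, but to exploit the extra hypothesis $k\mathcal{E}^{\gamma}_{AB}\geq \mathcal{E}^{\gamma}_{AC}$ so that the sharper Lemma \ref{e:2-lem1}$(\rmnum{1})$ applies in place of Corollary \ref{e:2-cor1}$(\rmnum{1})$. The natural substitution is $x=\alpha/\gamma$ and $t=\mathcal{E}^{\gamma}_{AC}/\mathcal{E}^{\gamma}_{AB}$.

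First I would record that the hypotheses of Lemma \ref{e:2-lem1}$(\rmnum{1})$ are met: the assumption $\mathcal{E}^{\gamma}_{AC}>0$ together with $k\mathcal{E}^{\gamma}_{AB}\geq \mathcal{E}^{\gamma}_{AC}$ gives $0<t\leq k$, while $0<k\leq 1$ is assumed; and $\alpha\geq[1+\log_2(m+2)]\gamma$ translates exactly into $x=\alpha/\gamma\geq 1+\log_2(m+2)$, which is the range required by the lemma.

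Next, since $\alpha/\gamma\geq 2\geq 1$, raising the monogamy relation \eqref{e:3-mono1} (applied to the tripartite state, i.e.\ $\mathcal{E}^{\gamma}_{A|BC}\geq\mathcal{E}^{\gamma}_{AB}+\mathcal{E}^{\gamma}_{AC}$) to the power $\alpha/\gamma$ preserves the inequality, so
\begin{equation*}
\mathcal{E}_{A|BC}^{\alpha}=(\mathcal{E}_{A|BC}^{\gamma})^{\frac{\alpha}{\gamma}}\geq(\mathcal{E}^{\gamma}_{AB}+\mathcal{E}^{\gamma}_{AC})^{\frac{\alpha}{\gamma}}=\mathcal{E}_{AB}^{\alpha}(1+t)^{\frac{\alpha}{\gamma}}.
\end{equation*}
Then I would apply inequality \eqref{e:2-3} of Lemma \ref{e:2-lem1}$(\rmnum{1})$ to the factor $(1+t)^{\alpha/\gamma}$ and multiply through by $\mathcal{E}_{AB}^{\alpha}$.

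The only bookkeeping step is the simplification $t^{\alpha/\gamma}\mathcal{E}_{AB}^{\alpha}=\mathcal{E}^{\alpha}_{AC}$, together with $1/t=\mathcal{E}^{\gamma}_{AB}/\mathcal{E}^{\gamma}_{AC}$, which turns the three summands of \eqref{e:2-3} into $\mathcal{E}_{AB}^{\alpha}$, the bracketed $k$-dependent factor times $\mathcal{E}^{\alpha}_{AC}$, and the parametrized term $\frac{m\alpha}{\gamma}\bigl(\mathcal{E}^{\gamma}_{AB}/\mathcal{E}^{\gamma}_{AC}-\frac1k\bigr)\mathcal{E}^{\alpha}_{AC}$, yielding \eqref{e:3.1-2} after collecting the common factor $\mathcal{E}^{\alpha}_{AC}$. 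There is no genuine analytic obstacle here, since the substance of the theorem is already encoded in Lemma \ref{e:2-lem1}$(\rmnum{1})$; the one point deserving care is the domain condition $t\leq k$, which is precisely what the new hypothesis $k\mathcal{E}^{\gamma}_{AB}\geq\mathcal{E}^{\gamma}_{AC}$ supplies and which is what allows the $k$-refined bound to replace the $k=1$ bound used in Theorem \ref{e:3.1-thm1}.
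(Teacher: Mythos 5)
Your proposal is correct and is exactly the paper's argument: the paper's own proof is the one-line remark that the theorem ``follows from the $\gamma$th-monogamy relation \eqref{e:3-mono1} and Lemma \ref{e:2-lem1} $(\rmnum{1})$, similar to the proof of Theorem \ref{e:3.1-thm1},'' and your write-up simply makes explicit the substitution $x=\alpha/\gamma$, $t=\mathcal{E}^{\gamma}_{AC}/\mathcal{E}^{\gamma}_{AB}$, the domain check $0<t\leq k\leq 1$, and the bookkeeping $t^{\alpha/\gamma}\mathcal{E}_{AB}^{\alpha}=\mathcal{E}_{AC}^{\alpha}$ that the paper leaves implicit.
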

\begin{proof} Similar to the proof of Theorem \ref{e:3.1-thm1}, it follows from the $\gamma$th-monogamy relation \eqref{e:3-mono1} and Lemma \ref{e:2-lem1} $(\rmnum{1})$.
\end{proof}

With this improvement, we discuss how to generalize the monogamy relations to multipartite quantum system. Clearly, we can partition the subsystems into two groups according to the ratio
of complementary subsystems being greater than 1 or less than 1 and apply the result of bipartite monogamy relations repeatedly to obtain the overall monogamy relation.

\begin{thm}\label{e:3.1-thm3} Let $\mathcal{E}$ be a bipartite quantum measure satisfying the $\gamma$th-monogamy relation \eqref{e:3-mono1} for $\gamma> 0$ and $\rho_{AB_{1}\cdots B_{N-1}}$ any $N$-qubit quantum state. Suppose $0<k\leq1$. If $k\mathcal{E}_{AB_{i}}^{\gamma}\geq\mathcal{E}_{A|B_{i+1}\cdots B_{N-1}}^{\gamma}$ for $i=1,\cdots,n$ and $\mathcal{E}_{AB_{j}}^{\gamma}\leq k\mathcal{E}_{A|B_{j+1}\cdots B_{N-1}}^{\gamma}$  for $j=n+1,\cdots,N-2$~$(1\leq n\leq N-3, N\geq4)$. Then for $\alpha\geq \left[1+{\rm log}_{2}(m+2)\right]\gamma\geq 2\gamma>0$, $m\geq0$, we have
\begin{equation}
\begin{aligned}\label{e:3.1-3}
&\mathcal{E}_{A|B_{1}\cdots B_{N-1}}^{\alpha} \geq \mathcal{E}^{\alpha}_{AB_{1}}+ \sum_{i=1}^{n-1}\left(\prod_{j=1}^{i}(M_{j}+T_{j})\right)\mathcal{E}^{\alpha}_{AB_{i+1}}+\\
&\left(\prod_{i=1}^{n}(M_{i}+T_{i})\right)\left(\mathcal{E}^{\alpha}_{AB_{N-1}}+\sum^{N-2}_{j=n+1}(Q_{j}+P_{j})\mathcal{E}^{\alpha}_{AB_{j}}
\right),
\end{aligned}
\end{equation}
where $M=\frac{(1+k)^{\frac{\alpha}{\gamma}}-1}{k^{\frac{\alpha}{\gamma}}}+k^{\frac{\alpha}{\gamma}}$, and for $i=1,2,\cdots,n$, $$M_{i}=M-\frac{\mathcal{E}^{\alpha}_{A|B_{i+1}\cdots B_{N-1}}}{\mathcal{E}^{\alpha}_{AB_{i}}}, T_{i}=\frac{m\alpha}{\gamma}\left(\frac{\mathcal{E}^{\gamma}_{AB_{i}}}{\mathcal{E}^{\gamma}_{A|B_{i+1}\cdots B_{N-1}}}-\frac{1}{k}\right),$$
and for $j=n+1,n+2,\cdots,N-2$,
$$Q_{j}=M-\frac{\mathcal{E}^{\alpha}_{AB_{j}}}{\mathcal{E}^{\alpha}_{A|B_{j+1}\cdots B_{N-1}}}, P_{j}=\frac{m\alpha}{\gamma}\left(\frac{\mathcal{E}^{\gamma}_{A|B_{j+1}\cdots B_{N-1}}}{\mathcal{E}^{\gamma}_{AB_{j}}}-\frac{1}{k}\right).$$
\end{thm}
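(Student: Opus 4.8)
The plan is to reduce the multipartite statement to repeated application of the bipartite bound in Theorem \ref{e:3.1-thm2}, peeling off one subsystem $B_\ell$ at a time while treating the remaining tail $B_{\ell+1}\cdots B_{N-1}$ as a single composite party $C$. The hypotheses split the indices into two regimes at the pivot $n$: for $i\leq n$ the individual term $\mathcal{E}_{AB_i}$ dominates the tail, whereas for $j>n$ the tail dominates $\mathcal{E}_{AB_j}$. I would treat the two regimes separately and then splice them together.

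First I would set up the recursion at each level. Applying the $\gamma$th-monogamy relation \eqref{e:3-mono1} to the tripartite split $A\mid B_\ell\mid (B_{\ell+1}\cdots B_{N-1})$ gives $\mathcal{E}_{A|B_\ell\cdots B_{N-1}}^{\gamma}\geq \mathcal{E}_{AB_\ell}^{\gamma}+\mathcal{E}_{A|B_{\ell+1}\cdots B_{N-1}}^{\gamma}$, which is exactly the tripartite monogamy hypothesis needed to invoke Theorem \ref{e:3.1-thm2} for that three-way partition. In the first regime ($1\leq i\leq n$) the condition $k\mathcal{E}_{AB_i}^{\gamma}\geq \mathcal{E}_{A|B_{i+1}\cdots B_{N-1}}^{\gamma}$ matches the hypothesis $k\mathcal{E}_{AB}^{\gamma}\geq \mathcal{E}_{AC}^{\gamma}$ of Theorem \ref{e:3.1-thm2} upon identifying $AB=AB_i$ and $AC=A|B_{i+1}\cdots B_{N-1}$; the theorem then yields
\[
\mathcal{E}_{A|B_i\cdots B_{N-1}}^{\alpha}\geq \mathcal{E}_{AB_i}^{\alpha}+(M_i+T_i)\,\mathcal{E}_{A|B_{i+1}\cdots B_{N-1}}^{\alpha}.
\]
Iterating for $i=1,\dots,n$ and substituting each bound into the previous one telescopes into the explicit sum $\mathcal{E}_{AB_1}^{\alpha}+\sum_{i=1}^{n-1}\bigl(\prod_{j=1}^{i}(M_j+T_j)\bigr)\mathcal{E}_{AB_{i+1}}^{\alpha}$ together with the carried term $\bigl(\prod_{i=1}^{n}(M_i+T_i)\bigr)\mathcal{E}_{A|B_{n+1}\cdots B_{N-1}}^{\alpha}$; the product $\prod_{j=1}^{i}(M_j+T_j)$ appears because at each level the propagating tail term carries the coefficient $(M_i+T_i)$ while the peeled-off $\mathcal{E}_{AB_{i+1}}^{\alpha}$ carries coefficient $1$. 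A short induction on $n$ formalizes this.

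For the second regime ($n+1\leq j\leq N-2$) the inequality $\mathcal{E}_{AB_j}^{\gamma}\leq k\mathcal{E}_{A|B_{j+1}\cdots B_{N-1}}^{\gamma}$ reverses the roles, so I would apply Theorem \ref{e:3.1-thm2} with $AB=A|B_{j+1}\cdots B_{N-1}$ and $AC=AB_j$, obtaining
\[
\mathcal{E}_{A|B_j\cdots B_{N-1}}^{\alpha}\geq \mathcal{E}_{A|B_{j+1}\cdots B_{N-1}}^{\alpha}+(Q_j+P_j)\,\mathcal{E}_{AB_j}^{\alpha}.
\]
Here the propagating tail term keeps coefficient $1$, so telescoping from $j=n+1$ up to $j=N-2$ (where the tail collapses to $\mathcal{E}_{A|B_{N-1}}^{\alpha}=\mathcal{E}_{AB_{N-1}}^{\alpha}$) produces the bare sum $\mathcal{E}_{AB_{N-1}}^{\alpha}+\sum_{j=n+1}^{N-2}(Q_j+P_j)\mathcal{E}_{AB_j}^{\alpha}$ with no products. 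Substituting this lower bound for $\mathcal{E}_{A|B_{n+1}\cdots B_{N-1}}^{\alpha}$ into the first-regime inequality, weighted by its coefficient $\prod_{i=1}^{n}(M_i+T_i)$, gives precisely \eqref{e:3.1-3}.

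The main obstacle is not any single estimate but the bookkeeping: one must track the asymmetry between the two regimes (products accumulate in the first, not in the second) and verify that $M_i+T_i$ and $Q_j+P_j$ are the correct specializations of the bracket in \eqref{e:3.1-2} under each identification of $AB$ and $AC$. Care is also needed to confirm that the tripartite monogamy relation is legitimately available at every level of the chain, i.e.\ that treating $B_{\ell+1}\cdots B_{N-1}$ as one subsystem still satisfies \eqref{e:3-mono1}, which is guaranteed by the standing assumption that $\mathcal{E}$ obeys the $\gamma$th-monogamy relation for the relevant partitions.
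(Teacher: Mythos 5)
Your proposal is correct and follows essentially the same route as the paper's own proof: peel off $B_1,\dots,B_n$ via Theorem \ref{e:3.1-thm2} with $k\mathcal{E}_{AB_i}^{\gamma}\geq\mathcal{E}_{A|B_{i+1}\cdots B_{N-1}}^{\gamma}$ (products $\prod_j(M_j+T_j)$ accumulating on the propagating tail), then handle $j=n+1,\dots,N-2$ with the roles of the two subsystems reversed (coefficients $Q_j+P_j$ on the peeled terms, no accumulation), and splice the second chain into the first. The role identifications, the asymmetric bookkeeping between the two regimes, and the use of the standing monogamy hypothesis to treat each tail as a single party all match the paper's argument.
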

\begin{proof} Since $k\mathcal{E}_{AB_{i}}^{\gamma}\geq\mathcal{E}_{A|B_{i+1}\cdots B_{N-1}}^{\gamma}$, $i=1,\cdots,n$, it follows from Theorem \ref{e:3.1-thm2} that
\begin{align*}
\mathcal{E}_{A|B_{1}\cdots B_{N-1}}^{\alpha}&=(\mathcal{E}_{A|B_{1}\cdots B_{N-1}}^{\gamma})^{\frac{\alpha}{\gamma}}\\
&\geq(\mathcal{E}_{AB_{1}}^{\gamma}+\mathcal{E}_{A|B_{2}\cdots B_{N-1}}^{\gamma})^{\frac{\alpha}{\gamma}}\\
&\geq \mathcal{E}_{AB_{1}}^{\alpha}+(M_{1}+T_{1})\mathcal{E}_{A|B_{2}\cdots B_{N-1}}^{\alpha}\\
&\geq \mathcal{E}_{AB_{1}}^{\alpha}+(M_{1}+T_{1})\mathcal{E}_{A|B_{2}}^{\alpha}+(M_{1}+T_{1})(M_{2}+T_{2})\mathcal{E}_{A|B_{3}\cdots B_{N-1}}^{\alpha}\\
&\geq\cdots\\
&\geq \mathcal{E}_{AB_{1}}^{\alpha}+(M_{1}+T_{1})\mathcal{E}_{A|B_{2}}^{\alpha}+\cdots+\left[(M_{1}+T_{1})\cdots(M_{n-1}+T_{n-1})\right]\mathcal{E}_{AB_{n}}^{\alpha}\\
&\quad +\left[(M_{1}+T_{1})\cdots(M_{n-1}+T_{n-1})(M_{n}+T_{n})\right]\mathcal{E}_{A|B_{n+1}\cdots B_{N-1}}^{\alpha}\\
&= \mathcal{E}^{\alpha}_{AB_{1}}+ \sum_{i=1}^{n-1}\left(\prod_{j=1}^{i}(M_{j}+T_{j})\right)\mathcal{E}^{\alpha}_{AB_{i+1}}+
\left(\prod_{i=1}^{n}(M_{i}+T_{i})\right)\mathcal{E}_{A|B_{n+1}\cdots B_{N-1}}^{\alpha}.
\end{align*}
As $\mathcal{E}_{AB_{j}}^{\gamma}\leq k\mathcal{E}_{A|B_{j+1}\cdots B_{N-1}}^{\gamma}$, $j=n+1,\cdots,N-2$, using Theorem \ref{e:3.1-thm2}, we have
\begin{align*}
&\mathcal{E}_{A|B_{n+1}\cdots B_{N-1}}^{\alpha}\geq(\mathcal{E}_{AB_{n+1}}^{\gamma}+\mathcal{E}_{A|B_{n+2}\cdots B_{N-1}}^{\gamma})^{\frac{\alpha}{\gamma}}\\
&\geq \mathcal{E}_{A|B_{n+2}\cdots B_{N-1}}^{\alpha}+(Q_{n+1}+P_{n+1})\mathcal{E}_{AB_{n+1}}^{\alpha}\\
&\geq \mathcal{E}_{A|B_{n+3}\cdots B_{N-1}}^{\alpha}+(Q_{n+2}+P_{n+2})\mathcal{E}_{AB_{n+2}}^{\alpha}+(Q_{n+1}+P_{n+1})\mathcal{E}_{AB_{n+1}}^{\alpha}\\
&\geq\cdots\\
&\geq\mathcal{E}_{AB_{N-1}}^{\alpha}+(Q_{N-2}+P_{N-2})\mathcal{E}_{AB_{N-2}}^{\alpha}+\cdots+(Q_{n+1}+P_{n+1})\mathcal{E}_{AB_{n+1}}^{\alpha}\\
&=\left(\mathcal{E}^{\alpha}_{AB_{N-1}}+\sum^{N-2}_{j=n+1}(Q_{j}+P_{j})\mathcal{E}^{\alpha}_{AB_{j}}\right).
\end{align*}
\end{proof}

A special case of Theorem \ref{e:3.1-thm3} is
often used and recorded as follows.
\begin{cor}\label{e:3.1-thm4} Let $\mathcal{E}$ be a bipartite quantum measure satisfying the $\gamma$th-monogamy relation \eqref{e:3-mono1}
for $\gamma> 0$ and $\rho_{AB_{1}\cdots B_{N-1}}$ be any $N$-qubit quantum states. Suppose $0<k\leq1$. If $k\mathcal{E}_{AB_{i}}^{\gamma}\geq\mathcal{E}_{A|B_{i+1}\cdots B_{N-1}}^{\gamma}$ for $i=1,\cdots,N-2$. Then we have
\begin{equation}
\begin{aligned}\label{e:3.1-4}
&\mathcal{E}_{A|B_{1}\cdots B_{N-1}}^{\alpha} \geq \mathcal{E}^{\alpha}_{AB_{1}}+ \sum_{i=1}^{N-2}\left(\prod_{j=1}^{i}(M_{j}+T_{j})\right)\mathcal{E}^{\alpha}_{AB_{i+1}}
\end{aligned}
\end{equation}
for $\alpha\geq \left[1+{\rm log}_{2}(m+2)\right]\gamma\geq 2\gamma>0$, $m\geq0$, where $M_{i}, T_{i}$ $(i=1,\cdots,N-2)$ are defined in Theorem \ref{e:3.1-thm3}.
%$M=\frac{(1+k)^{\frac{\alpha}{\gamma}}-1}{k^{\frac{\alpha}{\gamma}}}+k^{\frac{\alpha}{\gamma}}$, and for $i=1,2,\cdots,N-2$, $$M_{i}=M-\frac{\mathcal{E}^{\alpha}_{A|B_{i+1}\cdots B_{N-1}}}{\mathcal{E}^{\alpha}_{AB_{i}}}, T_{i}=\frac{m\alpha}{\gamma}\left(\frac{\mathcal{E}^{\gamma}_{AB_{i}}}{\mathcal{E}^{\gamma}_{A|B_{i+1}\cdots B_{N-1}}}-\frac{1}{k}\right).$$
\end{cor}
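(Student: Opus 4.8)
The plan is to recognize that \eqref{e:3.1-4} is the degenerate case of \eqref{e:3.1-3} in which every index belongs to the first group, i.e.\ $n=N-2$, so that the second group $j=n+1,\ldots,N-2$ is empty and the inner sum $\sum_{j=n+1}^{N-2}(Q_{j}+P_{j})\mathcal{E}^{\alpha}_{AB_{j}}$ disappears. Since Theorem \ref{e:3.1-thm3} is stated only for $1\leq n\leq N-3$, I cannot literally substitute $n=N-2$; instead I reproduce the first half of its proof — the cascade driven by Theorem \ref{e:3.1-thm2} — and run it all the way through $i=N-2$.

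First I would apply the $\gamma$th-monogamy relation \eqref{e:3-mono1} to peel off $B_{1}$, writing $\mathcal{E}_{A|B_{1}\cdots B_{N-1}}^{\gamma}\geq\mathcal{E}_{AB_{1}}^{\gamma}+\mathcal{E}_{A|B_{2}\cdots B_{N-1}}^{\gamma}$ and raising both sides to the power $\alpha/\gamma\geq 1$. Because the hypothesis $k\mathcal{E}_{AB_{1}}^{\gamma}\geq\mathcal{E}_{A|B_{2}\cdots B_{N-1}}^{\gamma}>0$ with $0<k\leq1$ is exactly the hypothesis of Theorem \ref{e:3.1-thm2} (with $B=B_{1}$ and $C=B_{2}\cdots B_{N-1}$), that theorem yields $\mathcal{E}_{A|B_{1}\cdots B_{N-1}}^{\alpha}\geq\mathcal{E}_{AB_{1}}^{\alpha}+(M_{1}+T_{1})\mathcal{E}_{A|B_{2}\cdots B_{N-1}}^{\alpha}$.

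Next I would iterate: at stage $i$ the same two moves — the $\gamma$th-monogamy split $\mathcal{E}_{A|B_{i}\cdots B_{N-1}}^{\gamma}\geq\mathcal{E}_{AB_{i}}^{\gamma}+\mathcal{E}_{A|B_{i+1}\cdots B_{N-1}}^{\gamma}$ followed by Theorem \ref{e:3.1-thm2}, whose hypothesis is supplied by $k\mathcal{E}_{AB_{i}}^{\gamma}\geq\mathcal{E}_{A|B_{i+1}\cdots B_{N-1}}^{\gamma}$ — replace the tail $\mathcal{E}_{A|B_{i}\cdots B_{N-1}}^{\alpha}$ by $\mathcal{E}_{AB_{i}}^{\alpha}+(M_{i}+T_{i})\mathcal{E}_{A|B_{i+1}\cdots B_{N-1}}^{\alpha}$, so that each newly peeled factor $(M_{i}+T_{i})$ multiplies the entire accumulated coefficient $\prod_{j=1}^{i-1}(M_{j}+T_{j})$. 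Running this for $i=1,\ldots,N-2$ and observing that at the last step $\mathcal{E}_{A|B_{N-1}}^{\alpha}=\mathcal{E}_{AB_{N-1}}^{\alpha}$, the terms collect into $\mathcal{E}_{AB_{1}}^{\alpha}+\sum_{i=1}^{N-2}\bigl(\prod_{j=1}^{i}(M_{j}+T_{j})\bigr)\mathcal{E}_{AB_{i+1}}^{\alpha}$, which is \eqref{e:3.1-4}; in particular the final summand $i=N-2$ carries the full product $\prod_{j=1}^{N-2}(M_{j}+T_{j})$.

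The routine parts dominate here, so the only thing requiring care is the bookkeeping of the cascading coefficients together with the verification, at every stage, that the hypothesis of Theorem \ref{e:3.1-thm2} genuinely holds — namely that the ratio $\mathcal{E}_{A|B_{i+1}\cdots B_{N-1}}^{\gamma}/\mathcal{E}_{AB_{i}}^{\gamma}$ stays $\leq k\leq 1$, which is guaranteed term-by-term by the standing assumption $k\mathcal{E}_{AB_{i}}^{\gamma}\geq\mathcal{E}_{A|B_{i+1}\cdots B_{N-1}}^{\gamma}$ for $i=1,\ldots,N-2$. I would also record once that $\alpha\geq[1+\log_{2}(m+2)]\gamma$ keeps the exponent $\alpha/\gamma$ in the admissible range $[1+\log_{2}(m+2),\infty)$ demanded by Lemma \ref{e:2-lem1}(i), so that Theorem \ref{e:3.1-thm2} applies unchanged at every stage.
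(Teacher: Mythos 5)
Your proposal is correct and follows essentially the same route as the paper: the paper records this corollary as the special case of Theorem \ref{e:3.1-thm3} in which every index lies in the first group, and its (implicit) proof is exactly the cascade you reproduce --- repeatedly applying Theorem \ref{e:3.1-thm2} with $B=B_{i}$, $C=B_{i+1}\cdots B_{N-1}$ and terminating with $\mathcal{E}_{A|B_{N-1}}^{\alpha}=\mathcal{E}_{AB_{N-1}}^{\alpha}$. Your added observation that one cannot literally set $n=N-2$ in the statement of Theorem \ref{e:3.1-thm3} (which assumes $1\leq n\leq N-3$), so the induction must be rerun one step further, is a point the paper glosses over and is handled correctly.
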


\bigskip

For an arbitrary $N$-qubit state $\rho_{AB_{1}\cdots B_{N-1}}$, the concurrence \cite{U,RBCGM,AF} satisfies the $2$nd-monogamy relation \eqref{e:1-mono1}, and the square of convex-roof extended negativity \cite{KDS,LCO,LL} also satisfies $2$nd-monogamy relation:
\begin{equation}\label{e:mono-CREN}
\mathcal{N}_{c}^{2}(\rho_{A|B_{1}\cdots B_{N-1}})\geq \sum_{i=1}^{N-1} \mathcal{N}_{c}^{2}(\rho_{AB_{i}}).
\end{equation}

It is easy to compare the new monogamy relations with previous bounds.
\begin{rem}
If bipartite quantum measure $\mathcal{E}$ is the concurrence $C$ or the convex-roof extended negativity $\mathcal{N}_{c}$, then

$(\rmnum{1})$ the monogamy relations of \cite[Lem. 2, Thm. 1-4]{TZJF} are the special cases of Theorem \ref{e:3.1-thm1}-\ref{e:3.1-thm4} for $m=0, k=1$.

$(\rmnum{2})$ the monogamy relations of \cite[Lem. 2, Thm. 1-4]{LSF} are the particular cases of Theorem \ref{e:3.1-thm1}-\ref{e:3.1-thm4} at $m=0$.\\
\end{rem}
Therefore the monogamy relations from Theorem \ref{e:3.1-thm1}-\ref{e:3.1-thm4} for concurrence and convex-roof extended negativity are sharper than those given in \cite{TZJF,LSF}, and subsequently also \cite{JLLF,ZF,JF1}.
Later, we will give an example to show the comparisons.

\subsection{\textbf{Tighter $\alpha$th-monogamy relation based on {\bf Case 2}}}
We will give some tighter $\alpha$th power inequalities based on Lemma \ref{e:2-lem1} $(\rmnum{2})$ and Corollary \ref{e:2-cor1} $(\rmnum{2})$.

\begin{lem}\label{e:3-lem1} Let $p_1\geq \cdots \geq p_N>0$ be $N$ numbers such that $p_{i}\geq  p_{i+1}$ $(i=1,\cdots,N-1)$, then one has that
\begin{equation}\label{e:3.2-1}
\begin{aligned}
&\left(\sum_{i=1}^{N}p_{i}\right)^{x} =\left(p_{1}+p_{2}+\cdots+p_{N}\right)^{x}\\
&\geq p_{1}^{x}+(2^{x}-1)p_{2}^{x}+\cdots+\left[N^{x}-(N-1)^{x}\right]p_{N}^{x}+
\left[1-\left(\frac{p_{1}}{p_{2}}\right)^{-x}\right]p_{2}^{x}\\
&+\left[2^{-x}-\left(\frac{p_{1}+p_{2}}{p_{3}}\right)^{-x}\right]p_{3}^{x}+\cdots+\left[(N-1)^{-x}-\left(\frac{p_{1}+\cdots+p_{N-1}}{p_{N}}\right)^{-x}\right]p_{N}^{x}\\
&+mx\left[\left(\frac{p_{1}}{p_{2}}-1\right)p_{2}^{x}+\cdots+\left(\frac{p_{1}+p_{2}+\cdots+p_{N-1}}{p_{N}}-(N-1)\right)p_{N}^{x}\right].
\end{aligned}
\end{equation}
\end{lem}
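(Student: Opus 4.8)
The plan is to prove \eqref{e:3.2-1} by induction on $N$, peeling off one summand at a time and invoking the sharp two–variable estimate of Lemma \ref{e:2-lem1} $(\rmnum{2})$ with a carefully chosen value of its free parameter. Throughout I write $S_j = p_1 + p_2 + \cdots + p_j$ for the partial sums, so that $S_N = \sum_{i=1}^N p_i$ is the quantity to be bounded and $S_j = S_{j-1} + p_j$.

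The essential observation is a ratio estimate. Since $p_1 \geq \cdots \geq p_{j-1} \geq p_j$, every one of the $j-1$ terms composing $S_{j-1}$ is at least $p_j$, whence
\[
\frac{S_{j-1}}{p_j} \geq j-1 \qquad (2 \leq j \leq N).
\]
This is exactly what permits me to apply Lemma \ref{e:2-lem1} $(\rmnum{2})$ to the pair $t = S_{j-1}/p_j$ and parameter value $k = j-1$: the hypothesis $t \geq k = j-1 \geq 1$ holds precisely because of the ordering assumption, and the admissibility condition $x \geq 1 + \log_2(m+2)$ (with $m \geq 0$), inherited from Lemma \ref{e:2-lem1}, is the same for every value of $k \geq 1$.

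Concretely, for each $j$ with $2 \leq j \leq N$ I would factor $S_j^{x} = p_j^{x}\,(1 + S_{j-1}/p_j)^{x}$ and apply \eqref{e:2-4} with $t = S_{j-1}/p_j$, $k = j-1$. Multiplying through by $p_j^{x}$ and using $(S_{j-1}/p_j)^{x} p_j^{x} = S_{j-1}^{x}$ together with $(1+(j-1))^{x} = j^{x}$ yields the single-step bound
\[
S_j^{x} \geq S_{j-1}^{x} + \left[j^{x} - (j-1)^{x} + (j-1)^{-x}\right] p_j^{x} - \left(\frac{S_{j-1}}{p_j}\right)^{-x} p_j^{x} + mx\left(\frac{S_{j-1}}{p_j} - (j-1)\right) p_j^{x}.
\]
The three new contributions here are exactly the coefficients of $p_j^{x}$ appearing in the first, second, and third groups of terms on the right-hand side of \eqref{e:3.2-1}, namely $j^{x}-(j-1)^{x}$, then $(j-1)^{-x}-(S_{j-1}/p_j)^{-x}$, and finally $mx\big(S_{j-1}/p_j - (j-1)\big)$. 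Starting from the base case $S_1^{x} = p_1^{x}$ (the $j=1$ term) and iterating this bound, i.e.\ replacing $S_{j-1}^{x}$ at each stage by its own lower bound, telescopes precisely to the claimed inequality \eqref{e:3.2-1}.

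The delicate point is the choice of parameter $k = j-1$ rather than the cruder $k = 1$ of Corollary \ref{e:2-cor1} $(\rmnum{2})$: taking $k=1$ would produce the coefficient $2^{x}$ in place of $j^{x}-(j-1)^{x}+(j-1)^{-x}$, and the offset $-1$ in place of $-(j-1)$ in the $m$-term, which for small $m$ is genuinely weaker than \eqref{e:3.2-1}. Hence the only real work is verifying that the ratio bound $S_{j-1}/p_j \geq j-1$ licenses the optimal parameter value; after that the induction is pure bookkeeping, matching the freshly generated coefficient of each $p_j^{x}$ against the three stated groups.
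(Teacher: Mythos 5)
Your proposal is correct and is essentially the paper's own proof: the paper also peels off one summand at a time, applying Lemma \ref{e:2-lem1} $(\rmnum{2})$ with $t=\frac{p_1+\cdots+p_{N-1}}{p_N}\geq N-1=k$ and then invoking the inductive hypothesis on the remaining partial sum, which is exactly your single-step bound unrolled top-down instead of bottom-up. Your explicit remarks on why $k=j-1$ (rather than $k=1$) is the right parameter and why the admissibility condition on $x$ is uniform in $k$ are sound and, if anything, make the bookkeeping clearer than in the paper.
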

\begin{proof} We use induction on $N$. The case of $N = 1$ is trivial.
Consider $N$ decreasing positive numbers $p_1\geq p_2\geq \ldots\geq p_N> 0$.
%We can assume $p_{N}\neq0$ otherwise the inequality %holds by the inductive hypothesis.
Setting $t=\frac{p_{1}+p_{2}+\cdots+p_{N-1}}{p_{N}}\geq N-1$, Lemma \ref{e:2-lem1} $(\rmnum{2})$ implies that
\begin{align*}
\left(\sum_{i=1}^{N}p_{i}\right)^{x} &=p_{N}^{x}\left(1+\frac{p_{1}+p_{2}+\cdots+p_{N-1}}{p_{N}}\right)^{x}\\
&\geq (p_{1}+p_{2}+\cdots+p_{N-1})^{x}+\left[N^{x}-(N-1)^{x}\right)]p_{N}^{x}\\
&+\left[(N-1)^{-x}-\left(\frac{p_{1}+p_{2}+\cdots+p_{N-1}}{p_{N}}\right)^{-x}\right]p_{N}^{x}\\
&+mx\left[\frac{p_{1}+p_{2}+\cdots+p_{N-1}}{p_{N}}-(N-1)\right]p_{N}^{x}.
\end{align*}
Thus, the inequality \eqref{e:3.2-1} can be directly obtained by the inductive hypothesis.
\end{proof}

The following result is immediate for any $N$-qubit quantum states $\rho_{AB_{1}\cdots B_{N-1}}$.

\begin{thm} \label{e:3.2-thm1}
Let $\mathcal{E}$ be a bipartite quantum measure satisfying the relation \eqref{e:3-mono1} for $\gamma> 0$ and $\rho_{AB_{1}\cdots B_{N-1}}$  any $N$-qubit quantum state. Arrange \{$\mathcal{E}_{i}=\mathcal{E}_{AB_{i'}}|i=1,\cdots,N-1\}$ in descending order.  If $\mathcal{E}^{\gamma}_{i}\geq \mathcal{E}^{\gamma}_{i+1}>0$ for $i=1,\cdots,N-2$, then
\begin{equation}\label{e:3.2-2}
\begin{aligned}
\mathcal{E}_{A|B_{1}\cdots B_{N-1}}^{\alpha}&\geq \mathcal{E}_{1}^{\alpha}+\sum_{i=2}^{N-1}\left[i^{\frac{\alpha}{\gamma}}-(i-1)^{\frac{\alpha}{\gamma}}+(i-1)^{-\frac{\alpha}{\gamma}}-\tau_{i}^{-\frac{\alpha}{\gamma}}\right]\mathcal{E}_{i}^{\alpha}\\
&+\sum_{i=2}^{N-1}m\frac{\alpha}{\gamma}\left[\tau_{i}-(i-1)\right]\mathcal{E}_{i}^{\alpha}
\end{aligned}
\end{equation}
for $\alpha\geq \left[1+{\rm log}_{2}(m+2)\right]\gamma\geq 2\gamma>0$, $m\geq0$, where $\tau_{i}=\frac{\mathcal{E}_{1}^{\gamma}+\cdots+\mathcal{E}_{i-1}^{\gamma}}{\mathcal{E}_{i}^{\gamma}}$, $i=2,\cdots,N-1$.
\end{thm}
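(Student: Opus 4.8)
The plan is to combine the $\gamma$th-monogamy relation \eqref{e:3-mono1} with the power-sum inequality of Lemma \ref{e:3-lem1}; the hypotheses have been arranged so that these two ingredients fit together after a single change of variables, and no genuinely new estimate is required.

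First I would use \eqref{e:3-mono1} to pass from the joint measure to the sum of bipartite ones. Since $\alpha/\gamma \geq 2 > 0$, the map $t \mapsto t^{\alpha/\gamma}$ is increasing on $[0,\infty)$, so
\begin{align*}
\mathcal{E}_{A|B_1\cdots B_{N-1}}^{\alpha} = \left(\mathcal{E}_{A|B_1\cdots B_{N-1}}^{\gamma}\right)^{\frac{\alpha}{\gamma}} \geq \left(\sum_{i=1}^{N-1}\mathcal{E}_i^{\gamma}\right)^{\frac{\alpha}{\gamma}}.
\end{align*}
This reduces the claim to a purely numerical lower bound for $\left(\sum_{i=1}^{N-1}p_i\right)^{x}$ with $p_i := \mathcal{E}_i^{\gamma}$ and $x := \alpha/\gamma$.

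Next I would apply Lemma \ref{e:3-lem1}, invoked at length $N-1$, to these $p_i$ and $x$. The ordering hypothesis $\mathcal{E}_i^{\gamma} \geq \mathcal{E}_{i+1}^{\gamma} > 0$ gives exactly $p_1 \geq \cdots \geq p_{N-1} > 0$, while $\alpha \geq [1+\log_2(m+2)]\gamma$ yields $x \geq 1+\log_2(m+2) \geq 2$, the range where Lemma \ref{e:3-lem1} (through Lemma \ref{e:2-lem1}(\rmnum{2})) holds. Reading off \eqref{e:3.2-1}, the term $p_1^{x}$ carries coefficient $1$ and, for each $i \geq 2$, the three groups of summands contribute to $p_i^{x}$ the coefficient
\begin{align*}
i^{x} - (i-1)^{x} + (i-1)^{-x} - \left(\frac{p_1+\cdots+p_{i-1}}{p_i}\right)^{-x} + mx\left(\frac{p_1+\cdots+p_{i-1}}{p_i} - (i-1)\right).
\end{align*}
Substituting back $x = \alpha/\gamma$, so that $p_i^{x} = \mathcal{E}_i^{\alpha}$ and $\frac{p_1+\cdots+p_{i-1}}{p_i} = \tau_i$, turns this into exactly the bracketed coefficient of $\mathcal{E}_i^{\alpha}$ in \eqref{e:3.2-2}; collecting the $m$-dependent pieces separately produces the second sum there.

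The computation itself is routine, so the main (and essentially the only) obstacle is the bookkeeping. I must keep the off-by-one count straight — there are $N-1$ bipartite measures, hence Lemma \ref{e:3-lem1} is used at length $N-1$ rather than $N$ — and check that the first, second, and $m$-weighted groups of terms in \eqref{e:3.2-1} regroup correctly by the index $i$ into the single bracket of \eqref{e:3.2-2} together with its separate $m$-sum. Once the hypotheses of Lemma \ref{e:3-lem1} are confirmed to transfer faithfully under $p_i = \mathcal{E}_i^{\gamma}$, $x = \alpha/\gamma$, the inequality \eqref{e:3.2-2} drops out with no further work.
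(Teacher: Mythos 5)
Your proposal is correct and follows essentially the same route as the paper: apply the $\gamma$th-monogamy relation together with monotonicity of $t\mapsto t^{\alpha/\gamma}$, then invoke Lemma \ref{e:3-lem1} with $p_i=\mathcal{E}_i^{\gamma}$, $x=\alpha/\gamma$ at length $N-1$, and regroup the coefficients into the bracket of \eqref{e:3.2-2}. The bookkeeping you flag (the off-by-one in the lemma's length and the merging of the three groups of terms indexed by $i$) is exactly what the paper's displayed computation carries out.
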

\begin{proof} From the $\gamma$th-monogamy relation \eqref{e:3-mono1} and Lemma \ref{e:3-lem1} we have
\begin{align*}
&\mathcal{E}_{A|B_{1}\cdots B_{N-1}}^{\alpha}\geq (\mathcal{E}_{AB_{1}}^{\gamma}+\mathcal{E}_{AB_{2}}^{\gamma}+\cdots+\mathcal{E}_{AB_{N-1}}^{\gamma})^{\frac{\alpha}{\gamma}}= (\mathcal{E}_{1}^{\gamma}+\mathcal{E}_{2}^{\gamma}+\cdots+\mathcal{E}_{N-1}^{\gamma})^{\frac{\alpha}{\gamma}}\\
&\geq \mathcal{E}_{1}^{\alpha}+(2^{\frac{\alpha}{\gamma}}-1)\mathcal{E}_{2}^{\alpha}+\cdots+\left[(N-1)^{\frac{\alpha}{\gamma}}-(N-2)^{\frac{\alpha}{\gamma}}\right]\mathcal{E}_{N-1}^{\alpha}\\
&+\left[1-\left(\frac{\mathcal{E}_{1}^{\gamma}}{\mathcal{E}_{2}^{\gamma}}\right)^{-\frac{\alpha}{\gamma}}\right]\mathcal{E}_{2}^{\alpha}
%&+\left[2^{-\frac{\alpha}{\gamma}}-\left(\frac{\mathcal{E}_{1}^{\gamma}+\mathcal{E}_{2}^{\gamma}}{\mathcal{E}_{3}^{\gamma}}\right)^{-\frac{\alpha}{\gamma}}\right]\mathcal{E}_{3}^{\alpha}
+\cdots+\left[(N-2)^{-\frac{\alpha}{\gamma}}-\left(\frac{\mathcal{E}_{1}^{\gamma}+\cdots+\mathcal{E}_{N-2}^{\gamma}}{\mathcal{E}_{N-1}^{\gamma}}\right)^{-\frac{\alpha}{\gamma}}\right]\mathcal{E}_{N-1}^{\alpha}\\
&+m\frac{\alpha}{\gamma}\left(\frac{\mathcal{E}_{1}^{\gamma}}{\mathcal{E}_{2}^{\gamma}}-1\right)\mathcal{E}_{2}^{\alpha}+\cdots+m\frac{\alpha}{\gamma}\left(\frac{\mathcal{E}_{1}^{\gamma}+\cdots+\mathcal{E}_{N-2}^{\gamma}}{\mathcal{E}_{N-1}^{\gamma}}-(N-2)\right)\mathcal{E}_{N-1}^{\alpha}\\
&=\mathcal{E}_{1}^{\alpha}+\sum_{i=2}^{N-1}\left[i^{\frac{\alpha}{\gamma}}-(i-1)^{\frac{\alpha}{\gamma}}+(i-1)^{-\frac{\alpha}{\gamma}}-\tau_{i}^{-\frac{\alpha}{\gamma}}+m\frac{\alpha}{\gamma}\left(\tau_{i}-(i-1)\right)\right]\mathcal{E}_{i}^{\alpha}.
\end{align*}
\end{proof}

The following Corollary give a comparison of the monogamy relations for any quantum measure.
\begin{cor}\label{e:3.2-cor1}
Let $\mathcal{E}$ be a bipartite quantum measure satisfying the $\gamma$th-monogamy relation \eqref{e:3-mono1} for $\gamma> 0$ and $\rho_{AB_{1}\cdots B_{N-1}}$ any $N$-qubit quantum state. By permuting $B_1, \ldots, B_{N-1}$,
we assume \{$\mathcal{E}_{i}=\mathcal{E}_{AB_{i'}}|i=1,\cdots,N-1\}$ is in descending order.  If $\mathcal{E}^{\gamma}_{i}\geq \mathcal{E}^{\gamma}_{i+1}>0$ for $i=1,\cdots,N-2$, then
\begin{equation}\label{e:3.2-3}
\begin{aligned}
\mathcal{E}_{A|B_{1}\cdots B_{N-1}}^{\alpha}&\geq \mathcal{E}_{1}^{\alpha}+\sum_{i=2}^{N-1}\left[i^{\frac{\alpha}{\gamma}}-(i-1)^{\frac{\alpha}{\gamma}}+(i-1)^{-\frac{\alpha}{\gamma}}-\tau_{i}^{-\frac{\alpha}{\gamma}}\right]\mathcal{E}_{i}^{\alpha}\\
&+\sum_{i=2}^{N-1}m\frac{\alpha}{\gamma}\left[\tau_{i}-(i-1)\right]\mathcal{E}_{i}^{\alpha}\\
&\geq \mathcal{E}_{1}^{\alpha}+\sum_{i=2}^{N-1}\left[i^{\frac{\alpha}{\gamma}}-(i-1)^{\frac{\alpha}{\gamma}}+(i-1)^{-\frac{\alpha}{\gamma}}-\tau_{i}^{-\frac{\alpha}{\gamma}}\right]\mathcal{E}_{i}^{\alpha}\\
&\geq \mathcal{E}_{1}^{\alpha}+\sum_{i=2}^{N-1}\left[i^{\frac{\alpha}{\gamma}}-(i-1)^{\frac{\alpha}{\gamma}}\right]\mathcal{E}_{i}^{\alpha}\\
&\geq \mathcal{E}_{1}^{\alpha}+\sum_{i=2}^{N-1}\left(2^{\frac{\alpha}{\gamma}}-1\right)\mathcal{E}_{i}^{\alpha}\geq \sum_{i=1}^{N-1}\mathcal{E}_{i}^{\alpha}
\end{aligned}
\end{equation}
for $\alpha\geq \left[1+{\rm log}_{2}(m+2)\right]\gamma\geq 2\gamma>0$, $m\geq0$, where $\tau_{i}=\frac{\mathcal{E}_{1}^{\gamma}+\cdots+\mathcal{E}_{i-1}^{\gamma}}{\mathcal{E}_{i}^{\gamma}}$, $i=2,\cdots,N-1$.
\end{cor}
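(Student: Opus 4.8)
The plan is to read the displayed formula \eqref{e:3.2-3} as a chain of five successive inequalities and prove them one link at a time. The top line is nothing but the conclusion \eqref{e:3.2-2} of Theorem \ref{e:3.2-thm1}, which I would invoke directly. The remaining four links are each a term-by-term comparison of the coefficients multiplying $\mathcal{E}_i^\alpha$, so after the first step the whole problem reduces to four elementary pointwise estimates, all of which rest on one structural fact about the ordered quantities.

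That structural fact is the bound $\tau_i\geq i-1$ for $2\leq i\leq N-1$, which I would establish first since it powers the next two links. Because the $\mathcal{E}_j^\gamma$ are in descending order, each of the $i-1$ numerator terms $\mathcal{E}_1^\gamma,\dots,\mathcal{E}_{i-1}^\gamma$ is at least $\mathcal{E}_i^\gamma$, whence $\tau_i=(\mathcal{E}_1^\gamma+\cdots+\mathcal{E}_{i-1}^\gamma)/\mathcal{E}_i^\gamma\geq i-1$.

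With this in hand I would dispatch the links in turn. For the second inequality I drop the parameter sum $\sum_{i=2}^{N-1}m\frac{\alpha}{\gamma}[\tau_i-(i-1)]\mathcal{E}_i^\alpha$: every factor is nonnegative ($m\geq0$, $\alpha/\gamma>0$, $\tau_i-(i-1)\geq0$ by the structural fact, $\mathcal{E}_i^\alpha>0$), so discarding it only decreases the right-hand side. For the third inequality I drop the terms $(i-1)^{-\alpha/\gamma}-\tau_i^{-\alpha/\gamma}$; since $\tau_i\geq i-1\geq1$ and $x\mapsto x^{-\alpha/\gamma}$ is decreasing on $(0,\infty)$, each such difference is nonnegative, so again the bound only loosens. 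The last inequality is the observation that $2^{\alpha/\gamma}-1\geq1$ whenever $\alpha/\gamma\geq2$, which gives $(2^{\alpha/\gamma}-1)\mathcal{E}_i^\alpha\geq\mathcal{E}_i^\alpha$ term by term and recovers $\sum_{i=1}^{N-1}\mathcal{E}_i^\alpha$.

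The one link requiring a genuine inequality, and hence the main obstacle, is the fourth: $i^{\alpha/\gamma}-(i-1)^{\alpha/\gamma}\geq2^{\alpha/\gamma}-1$ for all $i\geq2$. The right-hand side is the $i=2$ instance of the left, so I would prove that the consecutive increment $f(i)-f(i-1)$ of $f(x)=x^{\alpha/\gamma}$ is nondecreasing in $i$, i.e. $(i+1)^{\alpha/\gamma}+(i-1)^{\alpha/\gamma}\geq2\,i^{\alpha/\gamma}$. This is precisely midpoint convexity of $f$ at $x=i$, and since the hypothesis forces $\alpha/\gamma\geq1+\log_2(m+2)\geq2>1$, the function $f$ is convex on $[0,\infty)$, so the inequality holds and the minimal increment over $i\geq2$ is attained at $i=2$ with value $2^{\alpha/\gamma}-1$. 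Multiplying through by $\mathcal{E}_i^\alpha>0$ and summing yields the fourth link, and assembling the five comparisons along the chain gives \eqref{e:3.2-3}.
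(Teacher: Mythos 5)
Your proposal is correct and takes essentially the approach the paper intends: the paper states this corollary without a separate proof, treating it as an immediate consequence of Theorem \ref{e:3.2-thm1}, and your chain of comparisons (invoking that theorem for the first line, then discarding the nonnegative $m$-sum and the nonnegative differences $(i-1)^{-\alpha/\gamma}-\tau_i^{-\alpha/\gamma}$ via the structural fact $\tau_i\geq i-1$, and finally bounding the increments $i^{\alpha/\gamma}-(i-1)^{\alpha/\gamma}\geq 2^{\alpha/\gamma}-1\geq 1$ by convexity of $t\mapsto t^{\alpha/\gamma}$) supplies exactly the elementary estimates the paper leaves implicit. No gaps; the convexity argument for the fourth link is the right way to justify the one step that is not pure term-dropping.
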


\bigskip
The Bures measure \cite{VPRK} of entanglement is defined based on distance. For a two-qubit state, the Bures measure $\mathcal{E}_{\mathrm{B}}$ of entanglement as the function of the concurrence $C$ has an analytical formula \cite{SKB}
\begin{align}\label{e:3.2-4}
\mathcal{E}_{\mathrm{B}}(\rho_{AB})=\mathrm{B}(C(\rho_{AB})),
\end{align}
where $B(x)=2-2\sqrt{\frac{1+\sqrt{1-x^{2}}}{2}}$. The authors have shown in \cite{GYG2} that the Bures measure $\mathcal{E}_{\mathrm{B}}$ satisfies the monogamy relation \eqref{e:3-mono1}for $\gamma=1$. Thus, for any $N$-qubit quantum states,
\begin{align}\label{e:3.2-5}
\mathcal{E}^{\alpha}_{\mathrm{B}}(\rho_{A|B_{1}\cdots B_{N-1}})\geq \sum_{i=1}^{N-1}\mathcal{E}^{\alpha}_{\mathrm{B}}(\rho_{AB_{i}})
\end{align}
for any $\alpha\geq 1$. In addition, they generate a monogamy relation as follows
\begin{align}\label{e:3.2-6}
\mathcal{E}^{\alpha}_{\mathrm{B}}(\rho_{A|B_{1}\cdots B_{N-1}})\geq \mathcal{E}^{\alpha}_{\mathrm{B}}(\rho_{AB_{1}})+\sum_{i=2}^{N-1}\left[i^{\alpha}-(i-1)^{\alpha}\right]\mathcal{E}^{\alpha}_{\mathrm{B}}(\rho_{AB_{i}})
\end{align}
for any $\alpha\geq 1$ and $\mathcal{E}^{\alpha}_{\mathrm{B}}(\rho_{AB_{i}})\geq \mathcal{E}^{\alpha}_{\mathrm{B}}(\rho_{AB_{i+1}}),i=1,\cdots,N-2.$

So taking $\mathcal E$ as the Bures measure $\mathcal{E}_{\mathrm{B}}$ in Corollary \ref{e:3.2-cor1} will lead to a
tighter monogamy relation that outperforms those in \cite{GYG1,GYG2}.
\section{Examples}\label{s:example}
In the following, we use the concurrence, convex-roof extended negativity and Bures measure as detailed examples to demonstrate the advantages of our monogamy relations.

Now let us use an example from \cite{TZJF,LSF} to show that our bounds in monogamy relations are the strongest among all studies.

\begin{exmp} Let $\rho=|\psi\rangle\langle\psi|$ be the three-qubit state \cite{AACJLT}:
$$
|\psi\rangle=\lambda_{0}|000\rangle+\lambda_{1}e^{i \varphi}|100\rangle+\lambda_{2}|101\rangle+\lambda_{3}|110\rangle+\lambda_{4}|111\rangle,
$$
where $\sum_{i=0}^4 \lambda_i^2=1$, and $\lambda_{i}\geq 0$ for $i=0,1,2,3,4$.
Then the concurrence $C_{A \mid BC}=2 \lambda_0 \sqrt{\lambda_2^2+\lambda_3^2+\lambda_4^2}$, $C_{AB}=2 \lambda_0 \lambda_2$, and $C_{AC}=2 \lambda_0 \lambda_3$.

\bigskip
The comparison of the monogamy relations for concurrence based on {\bf Case 1}:

$(\rmnum{1})$   Set $\lambda_0=\lambda_1=\lambda_2=\frac{1}{2}, \lambda_3=\lambda_4=\frac{\sqrt{2}}{4}$.
Then we have $C_{A\mid BC}=\frac{\sqrt{2}}{2}, C_{AB }=\frac{1}{2}, C_{AC}=\frac{\sqrt{2}}{4}$. Thus $t=\frac{C_{AC}^{2}}{C_{AB}^{2}}=\frac{1}{2}$. Set $k = 0.8$ and $m=2$ (since $0<t\leq k \leq 1$ and $m\geq 0$).

Therefore, by Theorem \ref{e:3.1-thm2}, for any $\alpha\geq 2\left[1+{\rm log}_{2}(m+2)\right]=6$, our right-hand side of the monogamy relation is
\begin{align*}
Z_1&=C_{AB}^{\alpha}+\left[\frac{(1+k)^{\frac{\alpha}{2}}-1}{k^{\frac{\alpha}{2}}}+k^{\frac{\alpha}{2}}-\frac{C^{\alpha}_{AC}}{C^{\alpha}_{AB}}+\frac{m\alpha}{2}\left(\frac{C_{AB}^{2}}{C_{AC}^{2}}-\frac{1}{k}\right)\right]C_{AC}^{\alpha}\\
&=(\frac{1}{2})^{\alpha}+\left[\frac{(1+0.8)^{\frac{\alpha}{2}}-1}{0.8^{\frac{\alpha}{2}}}+0.8^{\frac{\alpha}{2}}-(\frac{\sqrt{2}}{2})^{\alpha}+\alpha\left(2-\frac{1}{0.8}\right)\right](\frac{\sqrt{2}}{4})^{\alpha}.
\end{align*}

The following right-hand side (RHS) $Z_2$ of the monogamy relation comes from \cite{LSF}, which is a special case of our bound at $m=0$.
\begin{align*}
    Z_2&=C_{AB}^{\alpha}+\left[\frac{(1+k)^{\frac{\alpha}{2}}-1}{k^{\frac{\alpha}{2}}}+k^{\frac{\alpha}{2}}-\frac{C^{\alpha}_{AC}}{C^{\alpha}_{AB}}\right]C_{AC}^{\alpha}\\
&=(\frac{1}{2})^{\alpha}+\left[\frac{(1+0.8)^{\frac{\alpha}{2}}-1}{0.8^{\frac{\alpha}{2}}}+0.8^{\frac{\alpha}{2}}-(\frac{\sqrt{2}}{2})^{\alpha}\right](\frac{\sqrt{2}}{4})^{\alpha}.
\end{align*}

The RHS of the monogamy relation given in \cite{YCFW} is
\begin{align*}
    Z_3&=C_{AB}^{\alpha}+\left[\frac{(1+k)^{\frac{\alpha}{2}}-1}{k^{\frac{\alpha}{2}}}\right]C_{AC}^{\alpha}=(\frac{1}{2})^{\alpha}+\left[\frac{(1+0.8)^{\frac{\alpha}{2}}-1}{0.8^{\frac{\alpha}{2}}}\right](\frac{\sqrt{2}}{4})^{\alpha}.
\end{align*}

The RHS of the monogamy relation given in \cite{JLLF,ZF,JF1}
are respectively:
\begin{align*}
    Z_{_4}=(\frac{1}{2})^{\alpha}+\left(2^{\frac{\alpha}{2}}-1\right)(\frac{\sqrt{2}}{4})^{\alpha},~~
    Z_{_5}= (\frac{1}{2})^{\alpha}+\frac{\alpha}{2}(\frac{\sqrt{2}}{4})^{\alpha},~~
    Z_{_6}= (\frac{1}{2})^{\alpha}+(\frac{\sqrt{2}}{4})^{\alpha}.
\end{align*}

\begin{figure}[H]
\centering
\includegraphics[scale=0.55]{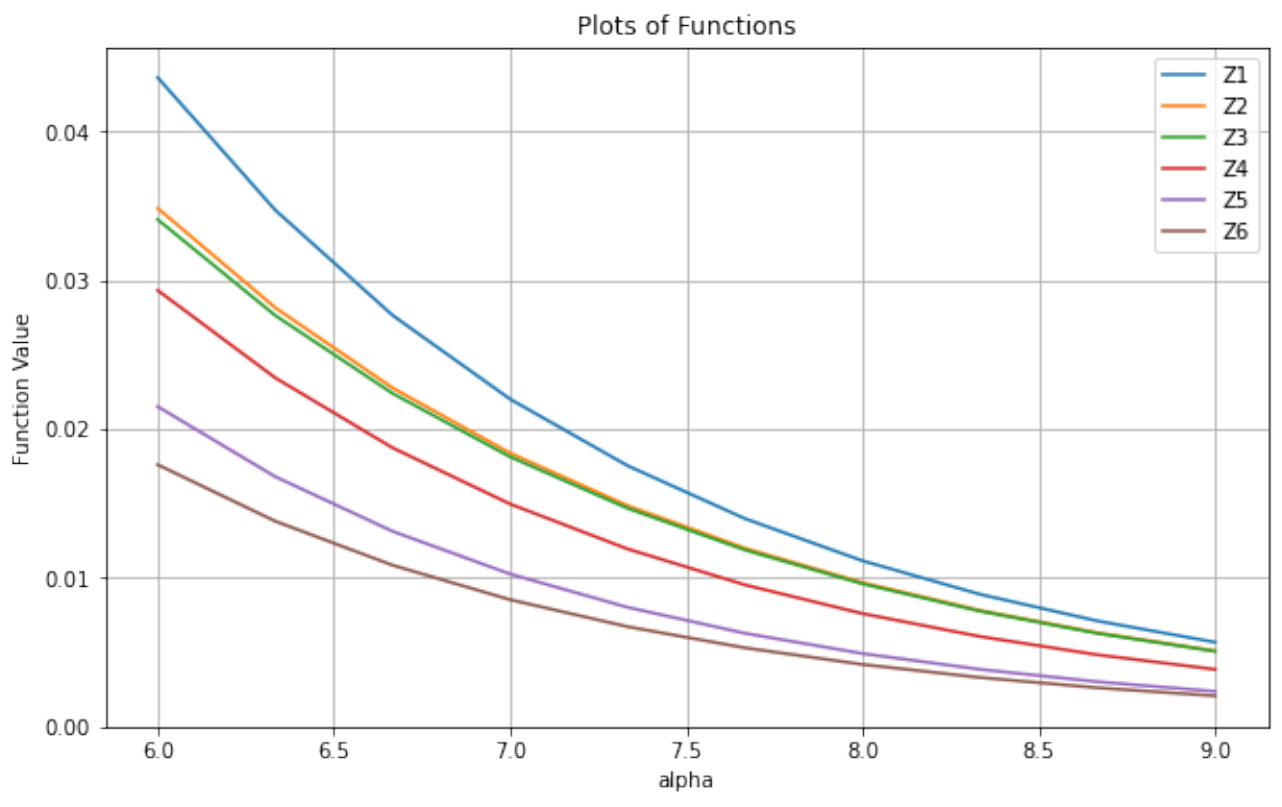}
\end{figure}

Figure 1: The $x$-axis is $\alpha$, and the $y$-axis shows the lower bounds of right hand side of the monogamy relations for the concurrence of $C_{A \mid BC}^\alpha$ (for $\gamma=2$). The blue curve $Z_1$ represents our lower bound at $m=2$ and $k=0.8$. The orange curve $Z_2$, the green curve $Z_3$, the red curve $Z_4$, the purple curve $Z_5$ and brown curve $Z_6$ represent respectively
the lower bounds of the RHS of the monogamy relation from \cite{LSF, YCFW, JLLF,ZF,JF1}.
%The red $Z_4$, purple $Z_5$, and brown $Z_6$ lines represent the lower bound from \cite{JLLF,ZF,JF1}.
The graph shows our lower bound $Z_1$ is the strongest. The curves $Z_1-Z_6$ are positioned from top to
bottom respectively.

\bigskip
The comparison of the convex-roof extended negativity based on {\bf Case 1}:

$(\rmnum{2})$ Set $\lambda_0=\lambda_1=\lambda_2=\frac{\sqrt{2}}{3}, \lambda_3=\lambda_4=\frac{\sqrt{6}}{6}$.
Then $\mathcal{N}_{cA \mid BC}=\frac{2\sqrt{10}}{9}$, $\mathcal{N}_{cAB}=\frac{4}{9}$, $\mathcal{N}_{cAC}=\frac{2\sqrt{3}}{9}$. Hence, $t=\frac{\mathcal{N}_{cAC}^{2}}{\mathcal{N}_{cAB}^{2}}=\frac{3}{4}$. Also set $k = 0.8$ and $m=2$. Similar to $(\rmnum{1})$, for any $\alpha\geq6$, our lower bound is
\begin{align*}
W_1&=\mathcal{N}_{cAB}^{\alpha}+\left[\frac{(1+k)^{\frac{\alpha}{2}}-1}{k^{\frac{\alpha}{2}}}+k^{\frac{\alpha}{2}}-\frac{\mathcal{N}_{cAC}^{\alpha}}{\mathcal{N}_{cAB}^{\alpha}}+\frac{m\alpha}{2}\left(\frac{\mathcal{N}_{cAB}^{2}}{\mathcal{N}_{cAC}^{2}}-\frac{1}{k}\right)\right]\mathcal{N}_{cAC}^{\alpha}\\
&=(\frac{4}{9})^{\alpha}+\left[\frac{(1+0.8)^{\frac{\alpha}{2}}-1}{0.8^{\frac{\alpha}{2}}}+0.8^{\frac{\alpha}{2}}-(\frac{\sqrt{3}}{2})^{\alpha}+\alpha\left(\frac{4}{3}-\frac{1}{0.8}\right)\right](\frac{2\sqrt{3}}{9})^{\alpha}.
\end{align*}
The lower bounds of the RHS of the monogamy relation based on \cite{LSF} and \cite{YCFW} are $W_2$ and $W_3$ respectively, where
\begin{align*}
    W_2&=(\frac{4}{9})^{\alpha}+\left[\frac{(1+0.8)^{\frac{\alpha}{2}}-1}{0.8^{\frac{\alpha}{2}}}+0.8^{\frac{\alpha}{2}}-(\frac{\sqrt{3}}{2})^{\alpha}\right](\frac{2\sqrt{3}}{9})^{\alpha},\\
     W_3&=(\frac{4}{9})^{\alpha}+\left[\frac{(1+0.8)^{\frac{\alpha}{2}}-1}{0.8^{\frac{\alpha}{2}}}\right](\frac{2\sqrt{3}}{9})^{\alpha}.
\end{align*}
The RHS of the monogamy relation given in \cite{JLLF,ZF,JF1} are
\begin{align*}
    W_{_4}=(\frac{4}{9})^{\alpha}+\left(2^{\frac{\alpha}{2}}-1\right)(\frac{2\sqrt{3}}{9})^{\alpha},~~
    W_{_5}= (\frac{4}{9})^{\alpha}+\frac{\alpha}{2}(\frac{2\sqrt{3}}{9})^{\alpha},~~
    W_{_6}= (\frac{4}{9})^{\alpha}+(\frac{2\sqrt{3}}{9})^{\alpha}.
\end{align*}
\begin{figure}[H]
\centering
\includegraphics[scale=0.55]{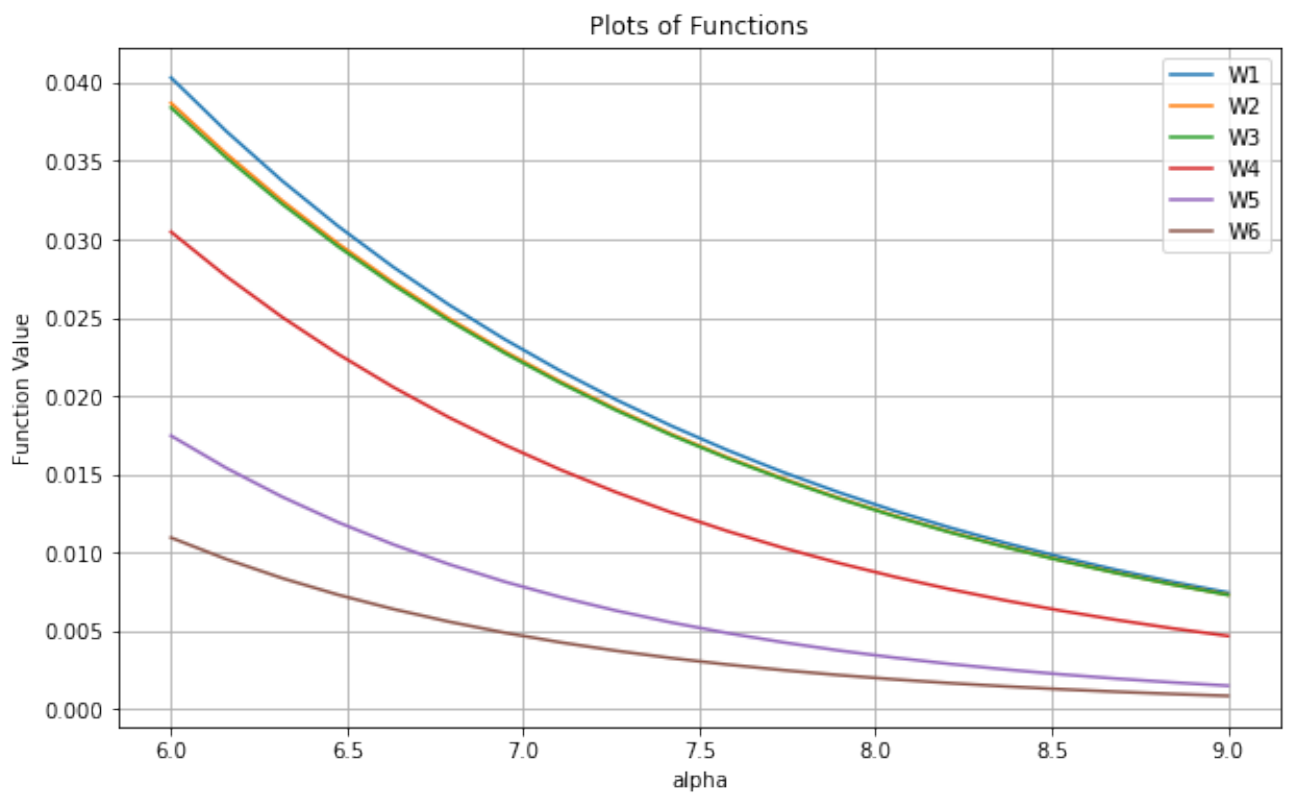}
\end{figure}
Figure 2: The $x$-axis is $\alpha$, and the $y$-axis shows the lower bounds of the concurrence measure of $\mathcal{N}_{cA \mid BC}$ (for $\gamma=2$). The blue curve $W_1$ represents our lower bound at $m=2$ and $k=0.8$. The orange curve $W_2$ represents the lower bound from \cite{LSF}. The green curve $W_3$ represents the lower bound from \cite{YCFW}. The red curve $W_4$, purple  curve $W_5$, and brown curve $W_6$ respectively represent the lower bound from \cite{JLLF, ZF,JF1}. The graph shows our lower bound $W_1$ is the strongest. The curves $W_1-W_6$ are positioned from top to
bottom respectively.

\bigskip
The comparison of the Bures measure based on {\bf Case 2}:

$(\rmnum{3})$ Set $\lambda_0=\lambda_3=\frac{\sqrt{2}}{3}, \lambda_1=\lambda_4=0, \lambda_2=\frac{\sqrt{5}}{3}$.
Then by \eqref{e:3.2-4} we have $\mathcal{E}_{\mathrm{B}}(\rho_{AB})=\mathrm{B}(C_{AB})\approx 0.14989$, $\mathcal{E}_{\mathrm{B}}(\rho_{AC})=\mathrm{B}(C_{AC})\approx 0.05279$, thus, $\tau_{2}=\frac{\mathcal{E}_{\mathrm{B}}(\rho_{AB})}{\mathcal{E}_{\mathrm{B}}(\rho_{AC})}\approx2.83936$ $(\gamma=1)$. Suppose $m=2$, by Corollary \ref{e:3.2-cor1}, for any $\alpha\geq \left[1+{\rm log}_{2}(m+2)\right]=3$, our lower bound is
\begin{align*}
T_1&=\mathcal{E}^{\alpha}_{\mathrm{B}}(\rho_{AB})+(2^{\alpha}-\tau_{2}^{-\alpha})\mathcal{E}^{\alpha}_{\mathrm{B}}(\rho_{AC})+m\alpha(\tau_{2}-1)\mathcal{E}^{\alpha}_{\mathrm{B}}(\rho_{AC})\\
&\approx 0.14989^{\alpha}+(2^{\alpha}-2.83936^{-\alpha})0.05279^{\alpha}+2\alpha(2.83936-1)0.05279^{\alpha}.
\end{align*}

The following  for RHS of the monogamy relation is based on \eqref{e:2-2},
\begin{align*}
T_2&=\mathcal{E}^{\alpha}_{\mathrm{B}}(\rho_{AB})+(2^{\alpha}-\tau_{2}^{-\alpha})\mathcal{E}^{\alpha}_{\mathrm{B}}(\rho_{AC})\approx 0.14989^{\alpha}+(2^{\alpha}-2.83936^{-\alpha})0.05279^{\alpha}.
\end{align*}

The RHS of the monogamy relation based on \cite{GYG1,GYG2} is
\begin{align*}
T_3&=\mathcal{E}^{\alpha}_{\mathrm{B}}(\rho_{AB})+(2^{\alpha}-1)\mathcal{E}^{\alpha}_{\mathrm{B}}(\rho_{AC})\approx 0.14989^{\alpha}+(2^{\alpha}-1)0.05279^{\alpha}.
\end{align*}

The RHS of the monogamy relation given in \cite{GYG2} is
\begin{align*}
T_4&=\mathcal{E}^{\alpha}_{\mathrm{B}}(\rho_{AB})+\mathcal{E}^{\alpha}_{\mathrm{B}}(\rho_{AC})\approx 0.14989^{\alpha}+0.05279^{\alpha}.
\end{align*}
\begin{figure}[H]
\centering
\includegraphics[scale=0.55]{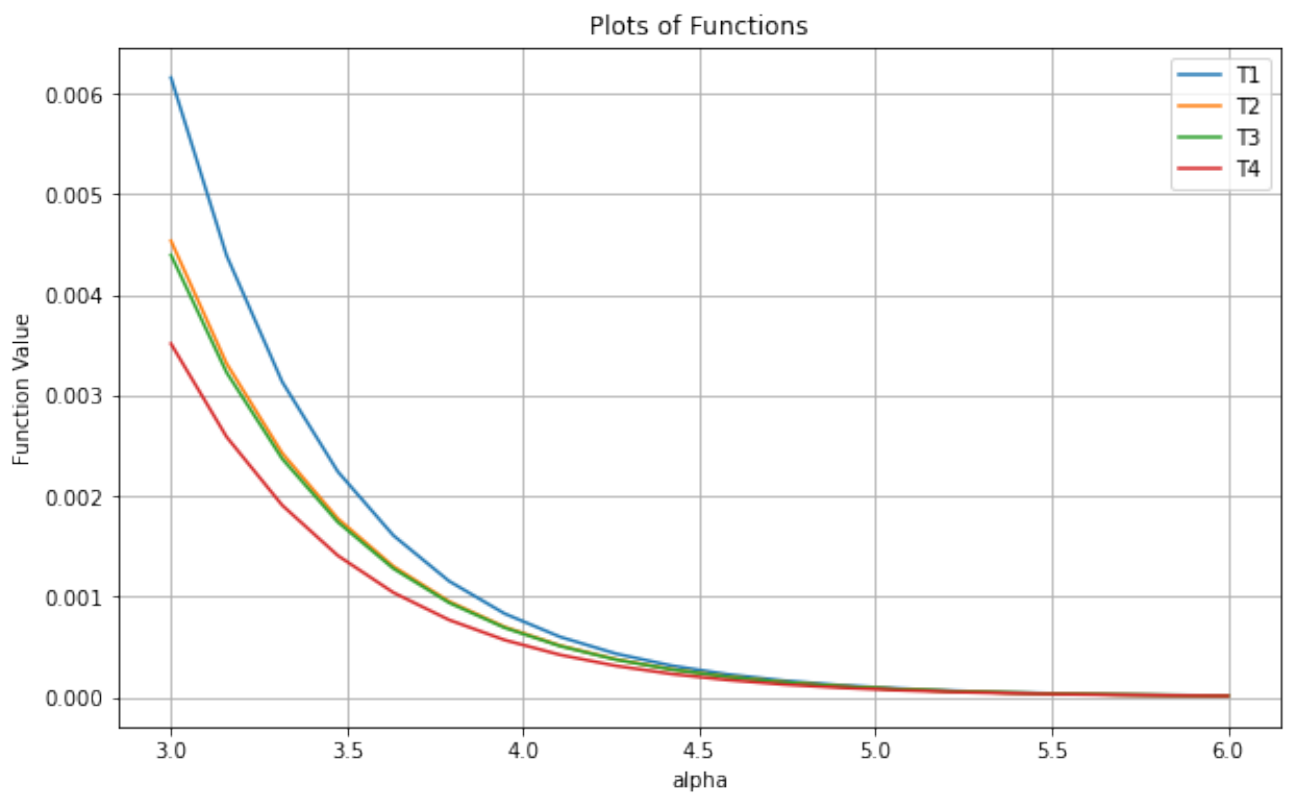}
\end{figure}
Figure 3: The $x$-axis is $\alpha$, and the $y$-axis shows the lower bounds of the Bures measure of $\mathcal{E}_{\mathrm{B}}(\rho_{A \mid BC})$ (for $\gamma=1$). The blue curve $T_1$ represents our lower bound at $m=2$. The orange curve $T_2$ represents the lower bound from $(2.2)$. The green curve $T_3$ and red curve $T_4$ respectively represent the lower bound from \cite{GYG1, GYG2}. The graph shows our lower bound $T_1$ is the strongest.
The curves $T_1-T_4$ are positioned from top to
bottom respectively.
\end{exmp}

\section{\textbf{Conclusion}}
By introducing a family of inequalities indexed by $m$,
 we get a family of monogamous relations in a unified manner. We illustrate in detail how the parameter helps us to obtain stronger
 monogamy relations in various cases. It turns out that by choosing the region
 $[(1+\log_2(m+2))\gamma, \infty)$ for $\alpha$,  where
 $m=\lfloor 2^{\frac{\alpha}{\gamma}-1}-2\rfloor$, %$\textcolor{red}{ N \leq \lfloor \frac{\alpha}%{\gamma}-1\rfloor}$
 one can get stronger $\alpha$th-monogamy relations. Comparisons are given in detailed examples which show that our bounds for RHS of the monogamy relation are indeed the {\it tightest} in both cases.

Our study is conducted for the monogamy relation based on any quantum correlation measure, thus the improved generalized monogamy relations hold for quantum coherence.

\bibliographystyle{plain}

\end{document}